\title{Sum-Product Optimization via
  Relational Equality Saturation for Large Scale Linear Algebra}
\newcommand{\lrparen}[1]{
  \left(#1\right)}
\definecolor{Gray}{gray}{0.8}
\newcommand{\MPLAN}[0]{\textsc{LA}}
\newcommand{\rmul}[0]{*}
\newcommand{\radd}[0]{+}
\newcommand{\unbind}[2]{{\ensuremath{\scriptstyle\left[-#1,-#2\right]\,}}}
\newcommand{\bind}[1]{{\ensuremath{\scriptstyle\left[#1\right]\,}}}
\newcommand{\tr}[0]{{\mathsf{T}}}
\newtheorem{definition}{Definition}[section]
\newtheorem{theorem}{Theorem}[section]
\newtheorem{lemma}[theorem]{Lemma}
\newtheorem{example}{Example}
\newtheorem{corollary}{Corollary}
\newcommand{\sys}{SPORES}
\newcommand{\N}{\mathbb N} 
\newcommand{\R}{\mathbb R} 
\begin{document}


\title{\sys: Sum-Product Optimization via
  Relational Equality Saturation for Large Scale Linear Algebra}



%
%
%
%

\author{
Yisu Remy Wang\\
      University of Washington\\
      Seattle, Washington\\
      \texttt{remywang@cs.washington.edu}
\And
Shana Hutchison\\
      University of Washington\\
      Seattle, Washington\\
      \texttt{shutchis@cs.washington.edu}
\And Jonathan Leang\\
      University of Washington\\
      Seattle, Washington\\
      \texttt{jleang@cs.washington.edu}
\And  
 Bill Howe\\
      University of Washington\\
      Seattle, Washington\\
      \texttt{billhowe@cs.washington.edu}
\And Dan Suciu\\
      University of Washington\\
      Seattle, Washington\\
      \texttt{suciu@cs.washington.edu}
}

\maketitle

\begin{abstract}
  Machine learning algorithms are commonly specified in linear algebra
  (LA). LA expressions can be rewritten into more efficient forms, by
  taking advantage of input properties such as \textit{sparsity}, as
  well as program properties such as \textit{common subexpressions}
  and \textit{fusible operators}.  The complex interaction among these
  properties' impact on the execution cost poses a challenge to
  optimizing compilers. Existing compilers resort to intricate
  heuristics that complicate the codebase and add maintenance cost
  but fail to search through the large space of equivalent LA
  expressions to find the cheapest one.  We introduce a general
  optimization technique for LA expressions, by converting the LA
  expressions into Relational Algebra (RA) expressions, optimizing the
  latter, then converting the result back to (optimized) LA
  expressions.  One major advantage of this method is that it is
  complete, meaning that any equivalent LA expression can be found
  using the equivalence rules in RA.  The challenge is the major size
  of the search space, and we address this by adopting and extending a
  technique used in compilers, called \textit{equality saturation}. We
  integrate the optimizer into SystemML and validate it empirically
  across a spectrum of machine learning tasks; we show that we can
  derive all existing hand-coded optimizations in SystemML, and
  perform new optimizations that lead to speedups from $1.2X$ to $5X$.
\end{abstract}

\section{Introduction}
Consider the Linear Algebra (LA) expression $sum((X-UV^T)^2)$ which
defines a typical loss function for approximating a matrix $X$ with a
low-rank matrix $UV^T$. Here, $sum()$ computes the sum of all matrix
entries in its argument, and $A^2$ squares the matrix $A$
element-wise. Suppose $X$ is a sparse, 1M x 500k matrix, and suppose
$U$ and $V$ are dense vectors of dimensions 1M and 500k respectively.
Thus, $UV^T$ is a rank 1 matrix of size 1M x 500k, and computing it
naively requires 0.5 trillion multiplications, plus memory allocation.
Fortunately, the expression is equivalent to
$sum(X^2) - 2U^TXV + U^TU * V^TV$.  Here $U^TXV$ is a scalar that can
be computed efficiently by taking advantage of the sparsity of $X$,
and, similarly, $U^TU$ and $V^TV$ are scalar values requiring only 1M
and 500k multiplications respectively.

Optimization opportunities like this are ubiquitous in machine
learning programs. State-of-the-art optimizing compilers such as
SystemML~\cite{DBLP:reference/bdt/Boehm19},
OptiML\cite{DBLP:conf/icml/SujeethLBRCWAOO11}, and
Cumulon\cite{DBLP:conf/sigmod/HuangB013} commonly implement syntactic
rewrite rules that exploit the algebraic properties of the LA
expressions. For example, SystemML includes a rule that rewrites the
preceding example to a specialized operator \footnote{See the
  \href{https://systemml.apache.org/docs/0.12.0/engine-dev-guide.html}{SystemML
    Engine Developer Guide} for details on the weighted-square loss
  operator \texttt{wsloss}. } to compute the result in a streaming
fashion.  However, such syntactic rules fail on the simplest
variations, for example SystemML fails to optimize $sum((X+UV^T)^2)$,
where we just replaced $-$ with $+$.  Moreover, rules may interact
with each other in complex ways.  In addition, complex ML programs
often have many common subexpressions (CSE), that further interact
with syntactic rules, for example the same expression $UV^T$ may occur
in multiple contexts, each requiring different optimization rules.


In this paper we describe \sys, a novel optimization approach for
complex linear algebra programs that leverages relational algebra as
an intermediate representation to completely represent the search
space. SPORES first transforms LA expressions into traditional
Relational Algebra (RA) expressions consisting of joins $*$, union $+$
and aggregates $\sum$.  It then performs a cost-based optimizations on
the resulting Relational Algebra expressions, using only standard
identities in RA.  Finally, the resulting RA expression is converted
back to LA, and executed.

A major advantage of \sys\ is that the optimization rules in RA are
complete.  Linear Algebra seems to require an endless supply of clever
rewrite rules, but, in contrast, by converting to RA, we can prove
that SPORES is complete.  The RA expressions in this paper are
over $K$-relations~\cite{DBLP:conf/pods/GreenKT07}; a tuple $X(i,j)$
is no longer true or false, but has a numerical value, e.g. $5$; in
other words, the RA expressions that result from LA expressions are
interpreted over bags instead of sets.  A folklore theorem states that
two Unions of Conjunctive Queries over bag semantics are equivalent
iff they are isomorphic\footnote{This was claimed, for conjunctive
  queries only, in Theorem 5.2 in~\cite{DBLP:conf/pods/ChaudhuriV93}
  but a proof was never produced; a proof was given for bag-set
  semantics in~\cite{DBLP:journals/tocl/CohenSN05}.  See the
  discussion in~\cite{DBLP:conf/icdt/Green09}.}, which implies that
checking equivalence is decidable.  (In contrast, {\em containment} of
two UCQs with bag semantics is
undecidable~\cite{DBLP:journals/tods/IoannidisR95}; we do not consider
containment in this paper.)  We prove that our optimizer rules are
sufficient to convert any RA expression into its {\em canonical form},
i.e.  to an UCQ, and thus can, in principle, can discover all
equivalent rewritings.

However, we faced a major challenge in trying to exploit the
completeness of the optimizer.  The search space is very large,
typically larger than that encountered in standard database
optimizers, because of the prevalence of unions $+$, large number of
aggregates $\sum$, and frequent common subexpressions. To tackle this,
\sys\ adopts and extends a technique from compilers called {\em
  equality saturation}~\cite{DBLP:journals/corr/abs-1012-1802}.
It uses a data structure called the E-Graph \cite{10.5555/909447} to
compactly represent the space of equivalent expressions, and equality
rules to populate the E-Graph, then leverages constraint solvers to
extract the optimal expression from the E-Graph.  

We have integrated \sys\ into
SystemML~\cite{DBLP:reference/bdt/Boehm19}, and show that it can
derive all hand-coded rules of SystemML.  We evaluated \sys\ on a
spectrum of machine learning tasks, showing competitive performance
improvement compared with more mature heuristic-based optimizers. Our
optimizer rediscovers all optimizations by the latter, and also finds
new optimizations that contribute to speedups of 1.2X to 5X.

We make the following contributions in this paper:
\begin{enumerate}
\item We describe a novel approach for optimizing complex Linear
  Algebra expressions by converting them to Relational Algebra, and
  prove that this approach is complete (Sec.~\ref{sec:la:to:ra}).
\item We present search algorithm based on Equality Saturation that
  can explore a large search space while reusing memory (Sec.~\ref{explore}).
\item We conduct an empirical evaluation of the optimizer using
  several real-world machine learning tasks, and demonstrate it's
  superiority over an heuristics-driven optimizer in SystemML
  (Sec.~\ref{sec:evaluation}).
\end{enumerate}{}


\section{Representing the Search Space} \label{sec:la:to:ra}

\subsection{Rules \texorpdfstring{$R_{LR}$}{RLR}: from  LA to RA and Back}
In this section we describe our approach of optimizing LA expressions
by converting them to RA.  The rules converting from LA to RA and back
are denoted $R_{LR}$.

To justify our approach, let us revisit our example loss function
written in LA and attempt to optimize it using standard LA identities.
Here we focus on algebraic rewrites and put aside concerns about the
cost model. Using the usual identities on linear algebra expressions,
one may attempt to rewrite the original expression as follows:
\begin{align*}
& sum((X-UV^T)^2) \\
= &sum((X-UV^T)*(X-UV^T)) \\
=&sum(X^2-2X*UV^T+(UV^T)^2) \\
= & sum(X^2) - 2sum(X*UV^T) + sum((UV^T)^2)
\end{align*}
At this point we are stuck trying to rewrite $sum(X*UV^T)$ (recall
that $*$ is element-wise multiplication); it turns out to be equal to
$sum(U^TXV)$, for any matrices $X,U,V$ (and it is equal to the scalar
$U^TXV$ when $U, V$ are column vectors), but this does not seem to
follow from standard LA identities like associativity, commutativity,
and distributivity.  Similarly, we are stuck trying to rewrite
$sum((UV^T)^2)$ to $sum(U^TU * V^TV)$.  Current systems manually add
syntactic rewrite rules, whenever such a special case is deemed
frequent enough to justify extending the optimizer.

\begin{figure}
\centering
\begin{tabular}{ccccc}
&
     $A$ & $x$ & $A * x^T$ & $Ax$\\
     \hline
     \\[-1em]
LA: &
     $\begin{bmatrix}
    0 & 5 \\
    7 & 0
    \end{bmatrix}$ & $\begin{bmatrix}
    3\\
    2
    \end{bmatrix}$ & $\begin{bmatrix}
    0 & 10 \\
    21 & 0
    \end{bmatrix}$ & $\begin{bmatrix}
    10 \\
    21
    \end{bmatrix}$ \\
    \\[-0.8em]
    \hline
RA: &
    \begin{tabular}{|c|c|c}
    $i$ &$j$&\# \\
    \hline
        $1$ & $2$ & 5 \\
        $2$ & $1$ & 7
    \end{tabular}{} &
    \begin{tabular}{|c|c}
         $j$ &\# \\
         \hline
         1 & 3\\
         2 & 2
    \end{tabular}{} &
    \begin{tabular}{|c|c|c}
    $i$ &$j$ & \# \\
    \hline
        $1$ & $2$ & 10 \\
        $2$ & $1$ & 21
    \end{tabular}{} &
    \begin{tabular}{|c|c}
         $i$ & \# \\
         \hline
         1 & 10 \\
         2 & 21
    \end{tabular}{} 
    
\end{tabular}{}
    
\caption{Top: Linear Algebra manipulates matrices and vectors.
  $A*x^T$ is element-wise multiplication, computing the matrix
  $(A_{ij}x_j)_{ij}$, while $Ax$ is standard matrix-vector
  multiplication.  Bottom: their translations into $K$-relations and
  Relational Algebra operations.  Each relation is a bag, where the
  attribute $\#$ represents the multiplicity, i.e. $A(1,2)$ has
  multiplicity $5$ etc.  Then $A*x^T$ becomes a standard query
  $Q(i,j) = A(i,j)*x(j)$, while $Ax$ is a query with a group-by and
  aggregate, $Q(i) = \sum_j A(i,j)*x(j)$.}
    \label{matrixkrel}
\end{figure}{}

Instead, our approach is to expand out the LA expression element-wise.
For example, assuming for simplicity that $U,V$ are column vectors, we
obtain
\begin{align*}
& sum((UV^T)^2) = \textstyle{\sum_{i,j}} (U_i * V_j) * (U_i * V_j) \\
& = \textstyle{\sum_{i,j}}(U_i * U_i) * (V_j * V_J) \\
& = ( \textstyle{\sum_i} U_i * U_i) * ( \textstyle{\sum_j} V_j * V_j)  \\
& = U^TU * V^TV
\end{align*}
The expressions using indices represent Relational Algebra
expressions.  More precisely, we interpret every vector, or matrix, or
tensor, as a $K$-relation~\cite{DBLP:conf/pods/GreenKT07} over the
reals. In other words we view $X_{ij}$ is a tuple $X(i,j)$ whose
``multiplicity'' is the real value of that matrix element.  We
interpret point-wise multiply as natural join; addition as union; sum
as aggregate; and matrix multiply as aggregate over a join\footnote{In
  the implementation, we use outer join for point-wise multiply and
  addition, where we multiply and add the matrix entries
  accordingly. In this paper we use join and union to simplify
  presentation.}.
Figure~\ref{matrixkrel} illustrates the correspondence between LA and
RA.  We treat each matrix entry $A_{ij}$ as the multiplicity of tuple
$(i, j)$ in relation $A$ under bag semantics. For example
$A_{2,1} = 7$, therefore the tuple $(2,1)$ has multiplicity of $7$ in
the corresponding relation.  $A*x^T$ denotes element-wise
multiplication, where each element $A_{ij}$ of the matrix is
multiplied with the element $x_j$ of the row-vector $x^T$.  In RA it
is naturally interpreted as the natural join $A(i,j) \Join x(j)$,
which we write as $A(i,j)*x(j)$.  Similarly, $Ax$ is the standard
matrix-vector multiplication in LA, while in RA it becomes a query
with a group by and aggregate, which we write as $\sum_j A(i,j)*x(j)$.
Our $K$-relations are more general than bags, since the entry of a
matrix can be a real number, or a negative number; they correspond to
$K$-relations over the semiring of reals $(\R, 0, 1, +, *)$.

\begin{table}
\centering
\begin{tabular}{llll}
&name & type & syntax \\ \hline \multirow{7}{*}{\rotatebox[origin=c]{90}{LA}}
  &mmult & $M_{M,L} \times M_{L,N} \rightarrow M_{M,N}$ & $AB$ or MxM
  \\ &elemmult & $M_{M,N} \times M_{M,N} \rightarrow M_{M,N}$ & $A*B$
  \\ &elemplus & $M_{M,N} \times M_{M,N} \rightarrow M_{M,N}$ & $A+B$ \\ &rowagg
  & $M_{M,N} \rightarrow M_{M,1}$ & $sum_{row} A$ \\ &colagg & $M_{M,N}
  \rightarrow M_{1,N}$ & $sum_{col} A$ \\ &agg & $M_{M,N} \rightarrow M_{1,1}$ &
  $sum A$ \\
&transpose & $M_{M,N} \rightarrow M_{N,M}$ & $A^\tr$ \\ \hline
  \multirow{2}{*}{\rotatebox[origin=c]{90}{conv.}} &bind & $M_{M,N} \times [i,j]
  \rightarrow R_{i:M,j:N}$ & $\bind{i,j} A$ \\ &unbind & $R_{i:M,j:N} \times
              [i,j] \rightarrow M_{M,N}$ & $\unbind{i}{j} A$ \\ \hline
              \multirow{2}{*}{\rotatebox[origin=c]{90}{RA}} &join & $R_{S_1}
              \times R_{S_2} \rightarrow R_{S_1 \cup S_2}$ & $A \rmul B$
              \\ &union & $R_{S_1} \times R_{S_2} \rightarrow R_{S_1 \cup S_2}$
              & $A \radd B$ \\ &agg & $R_S \times U \rightarrow R_{S \setminus
                U}$ & $\sum_{U} A$ \\
\end{tabular}
\caption{LA and RA Operators. The type $M_{M,N}$ is a matrix of size $M \times
  N$; $[i,j]$ is a list of attribute names; $R_{i:M,j:N}$ is a relation with
  attributes $i$ of size $M$ and $j$ of size $N$; $S_1, S_2, S,$ and $U$ are
  sets of attributes.}
\label{tPlanOps}
\end{table}
\begin{figure}
\centering
\begin{minipage}{0.6\textwidth}
\begin{enumerate}
  \item $A*B \rightarrow \unbind{i}{j}(\bind{i,j}A \rmul \bind{i,j}B)$.
  \item $A+B \rightarrow \unbind{i}{j}(\bind{i,j}A \radd \bind{i,j}B)$.
  \item $sum_{row} A \rightarrow [-i,\_ ] \sum_j \bind{i,j} A$. Similar for
    $sum_{col}$, $sum$.
  \item $AB \rightarrow \unbind{i}{k} \sum_j (\bind{i,j}A \rmul \bind{j,k}B)$.
  \item $A^\tr \rightarrow \unbind{j}{i} \bind{i,j}A$.
  \item $A - B \rightarrow A + (-1) * B$
\end{enumerate}
\end{minipage}{}
\caption{LA-to-RA Ruleset $R_{LR}$. Notice that $*$ means point-wise multiply for matrices,
  but means natural join for relations. Similarly, $+$ means addition for matrices, but
  union for relations.}
\label{RMR}
\end{figure}


We now describe the general approach in \sys.  The formal definition
of LA and RA are in Table~\ref{tPlanOps}.  LA consists of seven
operators, which are those supported in
SystemML~\cite{DBLP:reference/bdt/Boehm19}.  RA consists of only three
operators: $*$ (natural join), $+$ (union), and $\sum$ (group-by
aggregate).  Difference is represented as $A-B = A + (-1)B$ (this is
difference in $\R$; we do not support bag difference, i.e.  difference
in $\N$ like $3-5=0$, because there is no corresponding operation in
LA), while selection can be encoded by multiplication with relations
with 0/1 entries.  We call an expression using these three RA
operators an {\em RPlan}, for Relational Plan, and use the terms RPlan
and RA/relational algebra interchangeably.  Finally, there are two
operators, {\em bind} and {\em unbind} for converting between
matrices/vectors and $K$-relations.

The translation from LA to RA is achieved by a set of rules, denoted
$R_{LR}$, and shown in Figure~\ref{RMR}. The bind operator
$\bind{i,j}$ converts a matrix to a relation by giving attributes
$i,j$ to its two dimensions; the unbind operator $\unbind{i}{j}$
converts a relation back to a matrix. For example,
$\unbind{j}{i}\bind{i,j}A$ binds $A$'s row indices to $i$ and its
column indices to $j$, then unbinds them in the opposite order,
thereby transposing $A$.

\sys\ translates a complex LA expression into RA by first applying the
rules $R_{LR}$ in Figure~\ref{RMR} to each LA operator, replacing it
with an RA operator, preceded by {\em bind} and followed by {\em
  unbind}. Next, it eliminates consecutive unbind/bind operators,
possibly renaming attributes, e.g.  $\bind{k,l} \unbind{i}{j} A$
becomes $A[i \to k, j \to l]$, which indicates that the attributes $i$
and $j$ in $A$'s schema should be renamed to $k$ and $l$, by
propagating the rename downward into $A$.  As a result, the entire LA
expression becomes an RA expression (RPlan), with {\em bind} operators
on the leaves, and {\em unbind} at the top.
For an illustration, the left DAG in Figure~\ref{fig:radags} shows the
expression $sum((X - UV^T)^2)$ translated to relational algebra.

\begin{figure}
\centering
\begin{minipage}{0.6\textwidth}
\begin{enumerate}
  \item\label{RRC_mp} $A \rmul (B \radd C) = A \rmul B \radd A \rmul C$
  \item\label{RRC_ap} $\sum_i (A \radd B) = \sum_i A \radd \sum_i B$
  \item\label{RRC_ma} If $i \not\in A$, $A \rmul \sum_i B = \sum_i (A \rmul B)
    \quad$ (else rename $i$)
  \item\label{RRC_aa} $\sum_i \sum_j A = \sum_{i,j} A$
  \item\label{RRC_ac} If $i \not\in Attr(A)$, then $\sum_i A = A \rmul dim(i)$
  \item\label{RRC_pp} $A \radd (B \radd C) = +(A, B, C) \quad$ (assoc. \& comm.)
  \item\label{RRC_mm} $A \rmul (B \rmul C) = *(A, B, C) \quad$ (assoc. \& comm.)
\end{enumerate}
\end{minipage}
\caption{RA equality rules $R_{EQ}$. $*$ means natural join and $+$ means
  union.}
\label{RRC}
\end{figure}

\begin{figure}
    \centering
    \includegraphics[width=0.6\linewidth]{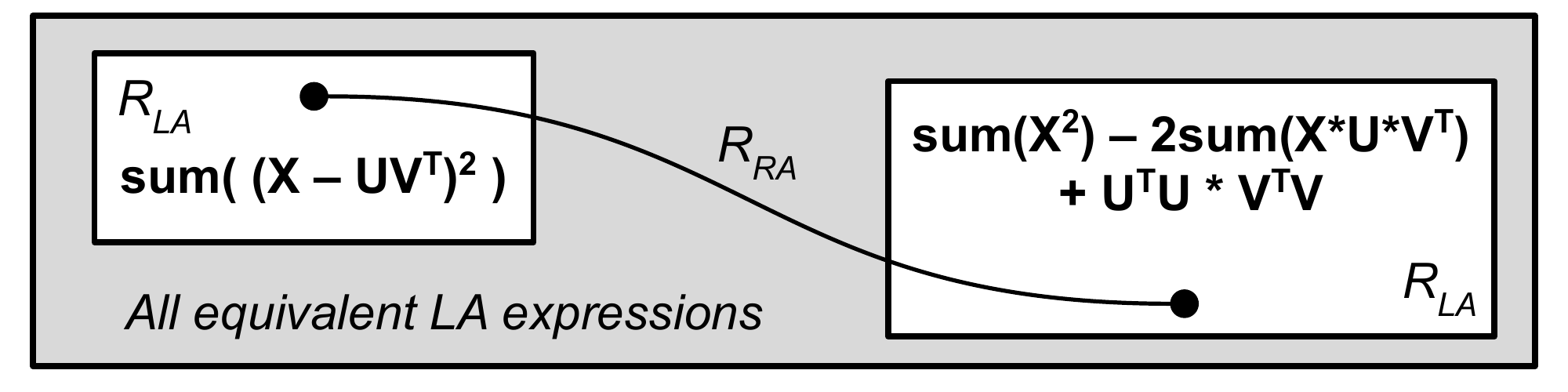}
    \caption{Venn diagram of LA and/or RA expressions.  Each of the two
      white islands represent LA expressions that can be proven
      equivalent using identities in LA; there is no way to move from
      one island to the other by using LA identities.  The large gray
      box shows the set of expressions that can be proven equivalent
      by using RA identities.  This allows us to connect the two
      islands. }
    \label{wormhole}
\end{figure}{}
\begin{figure}
    \centering
\begin{tikzcd}
e_{LA} \arrow[rr, "=" description, no head, dotted] \arrow[d, "R_{LR}"'] &           & e'_{LA} \arrow[d, "R_{LR}"]  \\
e_{RA} \arrow[r, "R_{EQ}"]                                               & \mathcal{C}(e_{RA}) \equiv \mathcal{C}(e'_{RA})  & e'_{RA} \arrow[l, "R_{EQ}"']
\end{tikzcd}
    
    \caption{Two LA expressions $e_{LA}$ and $e'_{LA}$ are equivalent \textit{iff} their
    relational counterparts $e_{RA}$ and $e'_{RA}$ have isomorphic
      canonical forms $\mathcal{C}(e_{RA}), \mathcal{C}(e'_{RA})$. $R_{LR}$ translates a LA expression to RA and $R_{EQ}$ are the relational equality rules. }
    \label{proofsketch}
\end{figure}{}

\begin{figure}
    \centering
    \includegraphics[width=0.6\linewidth]{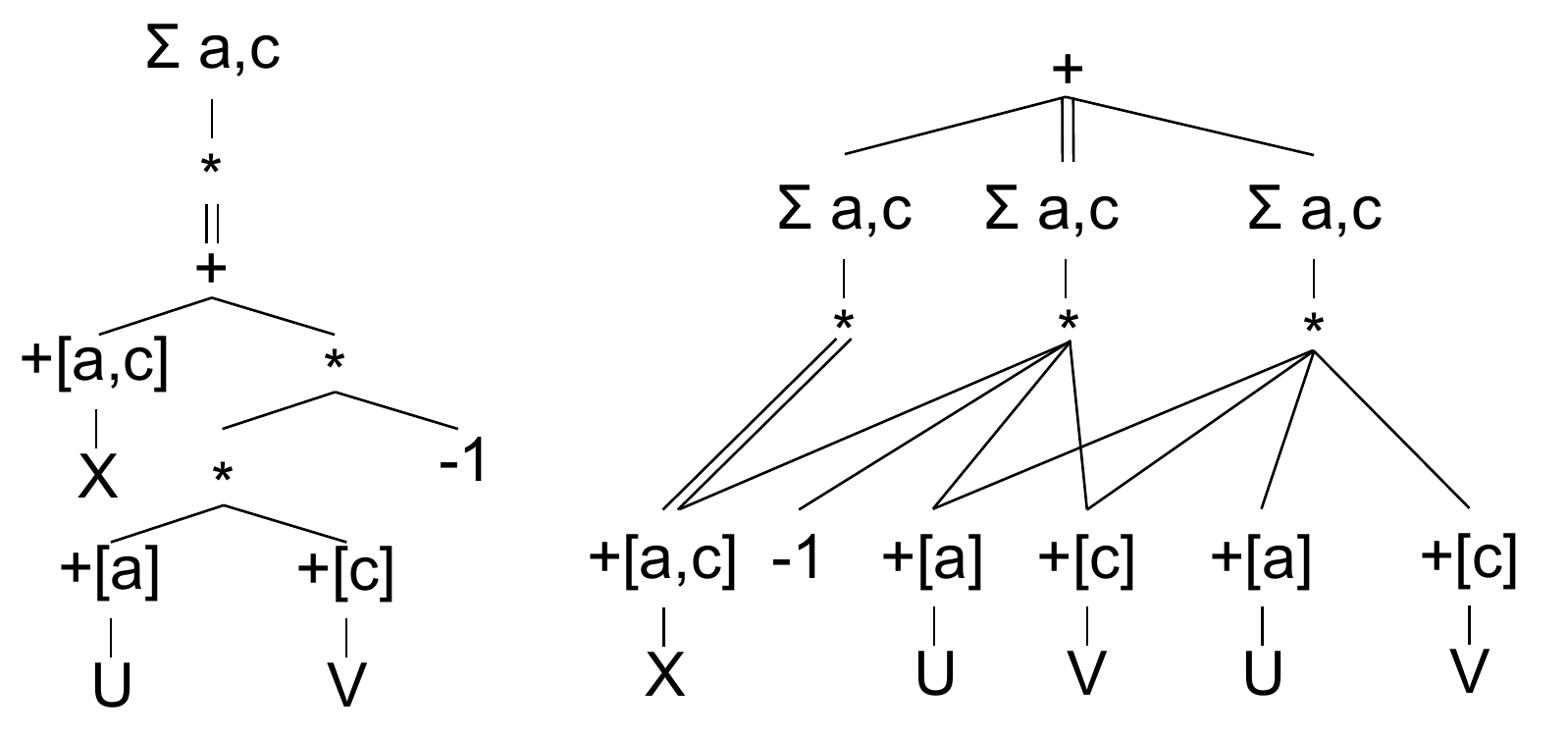}
    \caption{RA DAGs for $sum((X-UV^T)^2)$ (left) and its canonical form (right). }
    \label{fig:radags}
\end{figure}{}


\subsection{Rules \texorpdfstring{$R_{EQ}$}{REQ} : from RA to RA}\label{sec:req}

The equational rules for RA consists of seven identities shown in
Figure~\ref{RRC}, and denoted by $R_{EQ}$.  The seven rules are
natural relational algebra identities, where $*$ corresponds to
natural join, $+$ to union (of relations with the same schema) and
$\sum_i$ to group-by and aggregate.  In rule~\ref{RRC_ac},
$i \notin Attr(A)$ means that $i$ is not an attribute of $A$, and
$dim(i)$ is the dimension of index $i$.  For a very simple
illustration of this rule, consider $\sum_i 5$.  Here $5$ is a
constant, i.e. a relation of zero arity, with no attributes.  The rule
rewrites it to $5 dim(i)$, where $dim(i)$ is a number representing the
dimension of $i$.


\subsection{Completeness of the Optimization Rules}\label{ranf}

As we have seen at the beginning of this section, when rewriting LA
expressions using identities in linear algebra we may get stuck.  This
is illustrated in Figure~\ref{wormhole}, which shows two white islands
representing sets of provably equivalent LA expressions; the simple
identities in LA allow us to move within one white island, but are
insufficient to move from one island to the other.  Instead, by
rewriting the expressions to RA, the seven identities in $R_{EQ}$ are
much more powerful, and allow us to prove them equivalent.  We prove
here that this approach is {\em complete}, meaning that, if two LA
expressions are semantically equivalent, then their equivalence can be
proven by using rules $R_{EQ}$.  The proof consists of two parts: (1)
the rules $R_{EQ}$ are sufficient to convert any RA expression $e$ to
its {\em normal form} (also called {\em canonical form})
$\mathcal{C}(e)$, and back, (2) two RA expressions $e, e'$ are
semantically equal iff they have isomorphic normal forms,
$\mathcal{C}(e) \equiv \mathcal{C}(e')$.
%
%
%
%
Figure~\ref{proofsketch} shows the main steps at a high level: two LA
expressions are semantically equivalent if and only if their canonical
form in RA are isomorphic, where $R_{LR}$ translates each expression
to RA and $R_{EQ}$ takes the RA expression to its canonical form.
Throughout this section we write $e\ R_{EQ}^*\ e'$ when $e, e'$ can be
proven equal by using the identities in $R_{EQ}$ (Fig.~\ref{RRC}).


{\bf Normal Form} The normal form of an RPlan is similar in spirit to
the canonical form of polynomials as a sum of monomials, except that
the monomial terms also include aggregations.  As for polynomials, we
combine equal factors by introducing power, e.g. $X * X$ becomes
$X^2$, and combine isomorphic monomials by introducing constant
coefficients, e.g. $3X^2Y+5X^2Y$ becomes $8X^2Y$.  For example, the
following RPlan denotes a 1-dimensional vector
$Q(i) = (\sum_j x(i,j) * (\sum_k y(j,k)*x(i,j)) + \sum_{m,n} x(i,m)^2
y(m,n)$ and its canonical form is $2 \sum_{j,k} x(i,j)^2y(j,k)$.
Formally:

%
%
%

\begin{definition}
  An RA expression is \textbf{canonical}, or in \textbf{normal form},
  if it is the sum (i.e. $+$) of $n$ {\em monomials}, where each
  monomial $i$ consists of a constant $c_i$ multiplied by an
  aggregation over a (possibly empty) set of attributes $A_i$ (for
  $1 \leq i \leq n$), of a product of $m_i$ {\em factors}, where each
  factor is some matrix or vector $x_{ij}$ (for $1 \leq i \leq n$,
  $1 \leq j \leq m_i$) indexed by some index variables (not shown),
  and possibly raised to some power:

\[c_1 \sum_{A_1} \lrparen{x_{11}^{k_{11}} * \dots * x_{1m_1}^{k_{1m_1}}} 
+ \dots + c_n \sum_{A_n} \lrparen{x_{n1}^{k_{n1}} * \dots * x_{nm_n}^{k_{nm_n}}}\]

We further assume that no monomial contains the same factor twice
(otherwise, replace it with a single factor with a higher power
$k_{ij}$) and no two monomials are isomorphic (otherwise we replace
them with a single monomial with a larger coefficient $c_i$)
%
%
\end{definition}
The order of summands and multiplicands is insignificant because $*$
and $+$ are commutative and associative.  We notice that, unlike
traditional polynomials, here the same matrix name may occur in
different factors, for example in $\sum_{i,j,k} x(i,j)*x(j,k)*x(k,i)$
the matrix $x$ occurs three times, but the factors are different
i.e. we cannot write $x^3$.

The first of the completeness proof consists in showing that every
expression $e$ in RA is equivalent to some normal form
$\mathcal{C}(e)$, and, moreover, that their equivalence can be proven
using the rules $R_{EQ}$ in Figure~\ref{RRC}.

\begin{lemma}\label{lCanonPreservesSemantics}
$\forall e \in RA$ there exists a canonical form $\mathcal{C}(e)$ and,
moreover, their equivalence follows from the rules in $R_{EQ}$, in
notation: $e\ R_{EQ}^*\  \mathcal{C}(e)$.
\end{lemma}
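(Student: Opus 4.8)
The plan is to prove the lemma by structural induction on the RA expression $e$, which is built from bound leaves using the three operators $*$, $+$, and $\sum$. In the base case $e$ is a single bound matrix or vector, or a constant (a zero-arity relation), which is already a monomial --- coefficient $1$, empty aggregation, a single factor of power $1$ --- hence trivially canonical. For the inductive step I assume each proper subexpression $e'$ satisfies $e'\ R_{EQ}^*\ \mathcal{C}(e')$, and I treat the three operators in turn, showing in each case how to drive the combination of the (already canonical) subexpressions to a single canonical form using only the rules of Figure~\ref{RRC}.

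For the $+$ case, writing $e = e_1 + e_2$, I would first replace $e_1, e_2$ by their canonical forms, so that $e\ R_{EQ}^*\ \mathcal{C}(e_1) + \mathcal{C}(e_2)$; associativity and commutativity of $+$ (rule~\ref{RRC_pp}) let me flatten this into a single sum of monomials. It remains only to enforce the two distinctness invariants: isomorphic monomials are merged by reading distributivity (rule~\ref{RRC_mp}) backwards on their common aggregated product, collapsing $c_1 M + c_2 M$ into $(c_1 + c_2) M$ where $c_1, c_2$ are zero-arity constant factors, and repeated factors inside a monomial are folded into powers, which is purely definitional since $X^k$ abbreviates the $k$-fold join $X * \cdots * X$.

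The $*$ case is the crux. With $e = e_1 * e_2$, after substituting canonical forms I obtain a product of two sums of monomials; repeated application of distributivity (rule~\ref{RRC_mp}) together with commutativity and associativity of $*$ (rule~\ref{RRC_mm}) expands this into a sum of products $M_p * M_q$ of individual monomials. Each such product $(c_1 \sum_{A_1} P_1) * (c_2 \sum_{A_2} P_2)$ must be turned into one monomial $c_1 c_2 \sum_{A_1 \cup A_2}(P_1 * P_2)$: after first $\alpha$-renaming the bound aggregation variables in $A_1, A_2$ to be fresh and mutually disjoint (and disjoint from the free indices of the opposite factor), I can pull each aggregation to the top by rule~\ref{RRC_ma}, then coalesce the two aggregation blocks with rule~\ref{RRC_aa}; the constant coefficients, being zero-arity, commute out freely and multiply into $c_1 c_2$. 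I then restore the monomial invariants exactly as in the $+$ case. The $\sum$ case, $e = \sum_i e_1$, is handled analogously: I push $\sum_i$ over the top-level $+$ of $\mathcal{C}(e_1)$ via rule~\ref{RRC_ap}, then for each monomial use rule~\ref{RRC_ma} to slide $\sum_i$ past the constant coefficient and rule~\ref{RRC_aa} to absorb $i$ into the aggregation set --- or, if $i$ does not occur in the product, rule~\ref{RRC_ac} to convert $\sum_i$ into a $dim(i)$ factor that is then merged into the coefficient.

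I expect the main obstacle to be the careful bound-variable bookkeeping in the $*$ and $\sum$ cases: rule~\ref{RRC_ma} is sound only when the pulled-out variable is absent from the other operand, so the proof hinges on a fresh-renaming lemma guaranteeing that aggregated indices can always be made to avoid capture, together with a check that such renaming is itself justified by $R_{EQ}$ (or is a benign meta-level operation on bound variables). Beyond this I must argue termination --- that each inductive step reaches a canonical form after finitely many rewrites --- and verify that the two distinctness conditions can be enforced simultaneously, i.e.\ that folding factors into powers never recreates isomorphic monomials that were already merged. These are all routine once the renaming discipline is fixed, so the capture-avoidance argument is where the real care is required.
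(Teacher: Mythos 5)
Your proposal is correct and takes essentially the same approach as the paper, whose entire proof is the one-line sketch ``straightforward induction on the structure of $e$: apply distributivity of $*$ over $+$, then pull out the summation $\sum$''---your case analysis (flattening $+$ via rule~\ref{RRC_pp}, expanding products via rules~\ref{RRC_mp} and~\ref{RRC_mm}, hoisting aggregates via rules~\ref{RRC_ap}, \ref{RRC_ma}, \ref{RRC_aa}, and~\ref{RRC_ac} after fresh renaming) is exactly the detail the paper omits. Your flagged concerns about capture-avoiding renaming (already sanctioned by the side condition ``else rename $i$'' in rule~\ref{RRC_ma}) and about merging isomorphic monomials are the right places where care is needed, and they resolve as routinely as you expect.
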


The proof is a straightforward induction on the structure of $e$.  We
apply distributivity of $*$ over $+$, then pull out the summation
$\sum$; we omit the details.  We illustrate in Figure~\ref{fig:radags}
the canonical form of the expression $sum((X-UV^T)^2)$.  Notice that,
since the rules in $R_{EQ}$ are sound, it follows that any expression
$e$ has {\em the same semantics} as its canonical form
$\mathit{C}(e)$.

The second step of the completeness proof is to show that canonical
forms are unique up to isomorphism.

\begin{lemma} \label{lemma:unique:nf} (\textbf{Uniqueness of RA Normal
    Form}) Let $e_1, e_2$ be two RA expressions in normal form.
  Suppose that they have the same semantics, i.e. $e_1=e_2$ for all
  inputs with arbitrary dimensions.  Then their canonical forms are
  isomorphic.
\end{lemma}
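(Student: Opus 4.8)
The plan is to reduce the claim to an identity between formal polynomials, and then to recover the monomial structure of a normal form by isolating its ``generic'' terms. First I would treat every entry $x(a,b)$ of every input matrix or vector as an independent indeterminate over $\R$, fix the free (output) indices to a tuple $t$, and fix all dimensions to some size $N$. Under this view each normal-form expression evaluates to a polynomial in these indeterminates: a monomial expands as
\[ c_i \sum_{A_i}\bigl(x_{i1}^{k_{i1}} * \cdots * x_{im_i}^{k_{im_i}}\bigr) \;=\; c_i \sum_{\alpha}\, \prod_{j=1}^{m_i} x_{ij}(\ldots)^{k_{ij}}, \]
where $\alpha$ ranges over all assignments of the bound indices $A_i$ into $[N]$. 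Since $\R$ is infinite and $e_1, e_2$ agree as functions on all real inputs of every dimension, the two resulting polynomials must be equal coefficient-by-coefficient for every $N$ and every $t$. It therefore suffices to show that the monomial structure of a normal form can be read off from this expanded polynomial.

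The key device is the \emph{injective} (generic) assignment. Take $N$ and $t$ large and generic, and call an assignment injective if it sends the free and bound indices of a monomial to pairwise distinct values. Define the \emph{support size} $s_i$ of monomial $q_i$ as its number of free indices plus $|A_i|$; this is the maximal number of distinct index values a term of $q_i$ can exhibit, and it is attained exactly by injective assignments. Two facts drive the argument: (a) any assignment of a monomial $q_{i'}$ with $s_{i'} < s_i$ produces a polynomial-monomial using strictly fewer than $s_i$ index values, so it can never coincide with an injective term of $q_i$; and (b) under an injective $\alpha$, distinct factors of $q_i$ resolve to distinct indeterminates (this is exactly where the ``no repeated factor'' condition is used), so the resulting polynomial-monomial $\mu$ records precisely the relations, the exponents $k_{ij}$, and the variable-sharing pattern of $q_i$ — hence its isomorphism type $[\mu]$. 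Thus a generic injective term determines $q_i$ up to isomorphism, and by (a) only monomials of the same support size, assigned injectively, can contribute to its coefficient.

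Now I would peel off monomials from the top. Let $s_{\max}$ be the largest support size occurring in $e_1$ or $e_2$. For a generic injective term $\mu$ of a support-$s_{\max}$ monomial, only support-$s_{\max}$ monomials isomorphic to $[\mu]$ can contribute, each through a number of injective assignments equal to a positive integer $w_{[\mu]}$ that depends only on the isomorphism class. Because a normal form contains no two isomorphic monomials, each of $e_1, e_2$ has at most one monomial per class, so the coefficient of $\mu$ is $c_i \cdot w_{[\mu]}$ on each side. Equality of coefficients, together with $w_{[\mu]} \neq 0$, forces $e_1$ and $e_2$ to contain the same support-$s_{\max}$ monomials with identical constants. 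Once these top monomials are matched exactly, they are literally identical on both sides, so they contribute identically to \emph{every} coefficient of the two polynomials; I can therefore cancel them and repeat the argument on the strictly smaller support sizes. Inducting on $s_{\max}$, with the constant monomials as the base case, yields a bijection between the monomials of $e_1$ and $e_2$ pairing isomorphic monomials with equal coefficients, i.e. $\mathcal{C}(e_1) \cong \mathcal{C}(e_2)$.

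The main obstacle is precisely the \emph{contamination} of a monomial's terms by the degenerate, non-injective assignments of larger monomials, which scatter into lower support strata. The support-size stratification isolates the top stratum cleanly (nothing larger can reach it, and equal-size contributions are forced to be isomorphic), and the top-down cancellation keeps the induction honest; the normal form's requirement that no two monomials be isomorphic is exactly what prevents internal cancellations that could otherwise mask a monomial. Making these combinatorial bookkeeping steps precise — in particular verifying that $w_{[\mu]}$ is the same on both sides and that genericity of $t$ and $N$ can always be achieved within ``arbitrary dimensions'' — is the most delicate part of the write-up.
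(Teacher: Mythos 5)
Your proposal is correct, but it takes a genuinely different route from the paper's own proof. The paper argues by contrapositive with a single sparse witness input: after the same preliminary cancellation of cross-isomorphic terms that you perform, it orders terms by homomorphism ($t < t'$ iff a homomorphism $t \to t'$ exists, supported by lemmas showing homomorphisms are surjective on indices, compose, and that cycles force isomorphism), picks a term $t_1$ minimal in this order, and constructs one input whose entries are symbolic exactly on the injective image of $t_1$'s atoms and zero everywhere else; the ``diagonal'' monomial $x_1 x_2 \cdots$ then appears in $t_1$'s polynomial but in no other term's, since a matching monomial would yield a homomorphism into $t_1$, contradicting minimality. You instead keep \emph{all} tensor entries as indeterminates, compare the full polynomial expansions coefficient-by-coefficient, stratify monomials by support size, and peel from the top stratum downward using the automorphism-count weights $w_{[\mu]}$. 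The two proofs share a germ---injective assignments are what separate non-isomorphic terms---but your maximal-support stratification replaces the paper's homomorphism-minimality, and the paper's zero-padding (which kills all assignments leaving the witness support) is replaced by your value-counting argument that degenerate assignments realize fewer than $s_{\max}$ distinct index values and so cannot contaminate the top stratum. What your version buys: it avoids the homomorphism-preorder machinery entirely, and it is stronger in a useful sense---it reconstructs the entire normal form (isomorphism classes and coefficients) from the polynomial function rather than merely exhibiting one separating input; the step you flag as delicate does go through, since $w_{[\mu]}$ is the number of automorphisms of the monomial and composing injective assignments with a fixed isomorphism shows it depends only on the class, hence agrees on both sides. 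What the paper's version buys: far less bookkeeping (no weights, no strata, no cancellation induction), at the price of the preliminary cross-cancellation and the homomorphism lemmas. One shared caveat you should state explicitly in a full write-up: both arguments implicitly assume every aggregated index occurs in some atom (no vacuous $\sum_i$), which canonicalization guarantees via rule~\ref{RRC_ac} of $R_{EQ}$; otherwise coefficients acquire factors of $N$ and your claim that support size is attained exactly by injective assignments must be restated.
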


We give a proof of this lemma in the appendix.
Here, we comment on a subtle issue, namely the requirement that
$e_1=e_2$ on inputs of {\em any} dimensions is necessary.  For
example, if we restrict the matrices $x, y$ to be of dimension
$1 \times 1$, then the expressions $\sum_{i,j} x(i,j)*y(i,j)$ and
$\sum_{i,j} x(i,j)*y(j,i)$ have the same semantics, but different
canonical form.  For another example, consider three vectors $x,y,z$
of dimension $N$.  Then, if $N \leq 2$ these two expressions are
identical: $\sum_{i,j,k} x(i)*y(j)*z(k)+2\sum_i x(i)*y(i)*z(i)$ and
$\sum_{i,j}x(i)*y(i)*z(j)+\sum_{i,j}x(i)*y(j)*z(i)+\sum_{i,j}x(j)*y(i)*z(i)$,
although they are not equal in general.

We are now ready to establish the completeness of RA equalities, by
showing any equivalent LA expressions can be rewritten to each other
through the translation rules $R_{LR}$ and the canonicalization rules
$R_{EQ}$:

\begin{theorem} (\textbf{Completeness of $R_{EQ}$}) Two LA expressions
  are semantically equivalent if and only if their relational form is in the
  transitive closure of $R_{EQ}$ rules:
\[\forall e_1, e_2 \in \MPLAN{},\, \forall d. e_1(d) = e_2(d) 
\iff R_{LR}(e_1) R_{EQ}^* R_{LR}(e_2)\]
\end{theorem}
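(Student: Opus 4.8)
The plan is to prove the two directions separately, leaning on the two preceding lemmas and on the fact that both $R_{LR}$ and $R_{EQ}$ preserve the $K$-relation semantics. For the ($\Leftarrow$) (\emph{soundness}) direction, I would observe that each rule in $R_{EQ}$ (Figure~\ref{RRC}) is a valid relational identity over the semiring $(\R,0,1,+,*)$, so it preserves semantics, and hence so does the transitive closure $R_{EQ}^*$; likewise each rule in $R_{LR}$ (Figure~\ref{RMR}) is defined exactly so that the $K$-relation it produces denotes the same matrix as the LA operator it replaces, so $R_{LR}$ is semantics-preserving. Thus if $R_{LR}(e_1)\ R_{EQ}^*\ R_{LR}(e_2)$, the two RPlans denote the same $K$-relation on every input, and pulling this equality back through $R_{LR}$ gives $e_1(d)=e_2(d)$ for all $d$. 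This direction is routine.

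For the ($\Rightarrow$) (\emph{completeness}) direction, suppose $e_1(d)=e_2(d)$ for all inputs $d$ of arbitrary dimension, and set $e_{RA}=R_{LR}(e_1)$, $e'_{RA}=R_{LR}(e_2)$. Since $R_{LR}$ preserves semantics dimension-by-dimension, $e_{RA}$ and $e'_{RA}$ agree as $K$-relations on all inputs of all dimensions. By Lemma~\ref{lCanonPreservesSemantics} I obtain canonical forms with $e_{RA}\ R_{EQ}^*\ \mathcal{C}(e_{RA})$ and $e'_{RA}\ R_{EQ}^*\ \mathcal{C}(e'_{RA})$; soundness of $R_{EQ}$ then implies $\mathcal{C}(e_{RA})$ and $\mathcal{C}(e'_{RA})$ still agree on all inputs of all dimensions, and both are in normal form. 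Lemma~\ref{lemma:unique:nf} now yields $\mathcal{C}(e_{RA})\equiv\mathcal{C}(e'_{RA})$, i.e. the two normal forms are isomorphic. To turn this isomorphism into a derivation, I note that an isomorphism of normal forms is precisely a reordering of summands, a reordering of factors inside each monomial, and a consistent renaming of the bound aggregation variables --- all captured by the associativity/commutativity rules~\ref{RRC_pp} and~\ref{RRC_mm} together with $\alpha$-renaming of bound variables (already flagged in rule~\ref{RRC_ma}) --- so $\mathcal{C}(e_{RA})\ R_{EQ}^*\ \mathcal{C}(e'_{RA})$. Finally, because every rule in $R_{EQ}$ is an equality, $R_{EQ}^*$ is symmetric, so $\mathcal{C}(e'_{RA})\ R_{EQ}^*\ e'_{RA}$, and chaining the three derivations gives $R_{LR}(e_1)\ R_{EQ}^*\ R_{LR}(e_2)$.

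The real difficulty is sealed inside Lemma~\ref{lemma:unique:nf}; once uniqueness of the normal form up to isomorphism is granted, the theorem is plumbing. Two points nonetheless need care. First, the hypothesis must be equality on inputs of \emph{arbitrary} dimension: the examples following Lemma~\ref{lemma:unique:nf} (e.g. the $1\times 1$ case) show uniqueness fails under a fixed dimension bound, so I must quantify semantic equality over all dimensions and verify that $R_{LR}$ preserves semantics at each dimension. Second, the passage from the isomorphism ``$\equiv$'' to an actual $R_{EQ}^*$ derivation relies on the AC rules applying at the level of whole monomials and individual factors and on bound-variable renaming being admissible; I would spell this bookkeeping out explicitly rather than hide it inside ``$\equiv$''.
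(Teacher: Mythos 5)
Your proof takes essentially the same route as the paper's: soundness from the semantics-preservation of $R_{LR}$ and $R_{EQ}$, completeness by canonicalizing both translated expressions via Lemma~\ref{lCanonPreservesSemantics}, invoking uniqueness of the normal form (Lemma~\ref{lemma:unique:nf}) to get isomorphic canonical forms, and then using reversibility of the $R_{EQ}$ rules to chain the derivations. The only difference is that you make explicit a step the paper leaves implicit --- realizing the isomorphism of canonical forms as an actual $R_{EQ}^*$ derivation via the associativity/commutativity rules and bound-variable renaming --- which is a correct and useful clarification rather than a different approach.
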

Here $R_{LR}(e)$ translates LA expression $e$ into RA.

\begin{proof}
  Translating $e_1$ and $e_2$ to RA preserves semantics under
  $R_{LR}$. By Lemma~\ref{lCanonPreservesSemantics} normalizing
  $R_{LR}(e_1)$ and $R_{LR}(e_2)$ preserves semantics. By the
  uniqueness of the normal form (Lemma~\ref{lemma:unique:nf})
  \[\mathcal{C}(R_{LR}(e_1)) \equiv \mathcal{C}(R_{LR}(e_2)) \iff
    R_{LR}(e_1) = R_{LR}(e_2)\] Since every rule in $R_{EQ}$ is
  reversible,
  \[R_{LR}(e_1) = R_{LR}(e_2) \iff R_{LR}(e_1) R_{EQ}^* R_{LR}(e_2)\]
\end{proof}

\section{Exploring the Search Space} \label{explore}

With a complete representation of the search space by relational algebra, our
next step is to explore this space and find the optimal expression in it.
Traditional optimizing compilers commonly resort to heuristics to select from
available rewrites to apply. SystemML implements a number of heuristics for its
algebraic rewrite rules, and we discuss a few categories of them here.

\textsc{Competing or Conflicting Rewrites} The same expression may be eligible
for more than one rewrites. For example, $sum(AB)$ rewrites to
$sum(sum_{col}(A)^T*sum_{row}(B))$, but when both $A$ and $B$ are vectors the
expression can also be rewritten to a single dot product. SystemML then
implements heuristics to only perform the first rewrite when the expression is
not a dot product. In the worst case, a set of rules interacting with each other
may create a quadratic number of such conflicts, complicating the codebase.

\textsc{Order of Rewrites} Some rewrite should be applied after others to be
effective. For example, $X/y$ could be rewritten to $X*1/y$ which may be more
efficient, since SystemML provides efficient implementation for sparse
multiplication but not for division. This rewrite should occur before constant
folding; otherwise it may create spurious expressions like $X / (1/y)
\rightarrow X * (1/(1/y))$, and without constant folding the double division
will persist. However, a rewrite like $1/(1 + exp(-X)) \rightarrow sigmoid(X)$
should come after constant folding, in order to cover expressions like $(3-2)/(1
+ exp(-X))$. Since SystemML requires all rewrites to happen in one phase and
constant folding another, it has to leave out\footnote{Another reason to leave
  out this rewrite is that $X*1/y$ rounds twice, whereas $X/y$ only rounds
  once.} rewrites like $X/y \rightarrow X*1/y$.

\textsc{Dependency on Input / Program Properties} Our example optimization from 
$sum((X-UV^T)^2)$  to $sum(X^2) -2U^TXV + U^TU*V^TV$
improves performance only if $X$ is sparse. Otherwise,
computing $X^2$ and $X*UV^T$ would both create dense intermediates. Similarly,
some rewrites depend on program properties like common subexpressions. Usually,
these rewrites only apply when the matched expression shares no CSE with others
in order to leverage common subexpression elimination. Testing input and program
properties like this becomes boilerplate code, making implementation tedious and
adds burden to maintenance.

\textsc{Composing Rewrites} Even more relevant to us is the problem of composing
larger rewrites out of smaller ones. Our equality rules $R_{EQ}$ are very
fine-grained, and any rule is unlikely to improve performance on its own. 
Our example optimization from $sum((X-UV^T)^2)$ to $sum(X^2) - 2U^TXV + U^TU * V^TV$ 
takes around 10 applications of $R_{EQ}$ rules. 
 If an optimizer applies rewrites one by one, it is
then very difficult, if not impossible, for it to discover the correct sequence
of rewrites that compose together and lead to the best performance.

Stepping back, the challenge of orchestrating rewrites is known as the
\emph{phase-ordering problem} in compiler optimization. Tate
et al. \cite{DBLP:journals/corr/abs-1012-1802} proposed a solution to this
problem dubbed
\textit{equality saturation}, which we adapt and extend in SPORES.

\subsection{Equality Saturation}

Equality saturation optimizes an expression in two steps:

\textit{Saturation}: given the input expression, the optimizer  enumerates
equivalent expressions and collects them into a compact representation called
the E-Graph \cite{10.5555/909447}.

\textit{Extraction}: given a cost function, the optimizer selects the optimal
expression from the E-Graph. An expression is represented by a subgraph of the
E-Graph, and the optimizer uses a constraint solver to find the subgraph that is
equivalent to the input, and is optimal according to the cost function.

\subsubsection*{The E-Graph Data Structure}

\begin{figure}
\centering
    \includegraphics[scale=0.6]{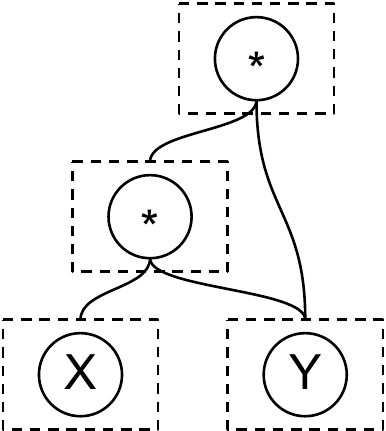} \qquad \includegraphics[scale=0.6]{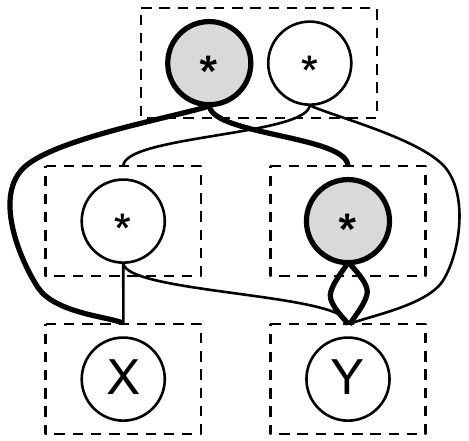}
    \caption{E-Graph representing $(X*Y)*Y$ (left), and the graph after applying
      associativity to the root (right). New nodes are in gray. Each dashed box is an E-Class.}
    \label{assoc}
\end{figure}

An E-Graph represents sets of equivalent expressions. A node in the graph is
called an E-Class, which contains the root operators of a set of equivalent
expressions. The edges are similar to the edges in an abstract syntax tree; but
instead of pointing from an operator directly to a child, each edge points from
an operator to an E-Class of expressions. For example, in Figure~\ref{assoc} the
top class in the middle represents the set of equivalent expressions \{$(X*Y)*Y,
X*(Y*Y)$\}. Note that the class represents two expressions, each with 2
appearances of $Y$ and one appearance of $X$, whereas each variable only appears
once in the E-Graph. This is because the E-Graph makes sure its expressions share
all possible common subexpressions. As the size of the graph grows, this
compression becomes more and more notable; in some cases a graph can 
represent a number of expressions exponential to its
size \cite{DBLP:journals/corr/abs-1012-1802}. We take advantage of this
compression in SPORES to efficiently cover vast portions of the search
space. If saturation, as described below, carries out to convergence, the E-Graph represents the search space exhaustively. 

An E-Graph can also be seen as an AND-OR DAG over expressions. Each E-Class is an OR node whose children are equivalent expressions from
which the optimizer chooses from. Each operator is an AND node whose children must all be picked if the operator itself is picked. In this paper we favor the terms E-Graph and E-Class to emphasize each OR node is an equivalence class. 

\subsubsection*{Saturating the E-Graph}
At the beginning of the optimization process, the optimizer instantiates the
graph by inserting the nodes in the syntax tree of the input expression one by
one in post order. For example, for input $(X*Y)*Y$, we construct the left graph
in Figure~\ref{assoc} bottom-up. By inserting in post order, we readily exploit
existing common subexpressions in the input. Once the entire input expression is
inserted, the optimizer starts to extend the graph with new expressions equivalent to
the input. It considers a list of equations, and matches either side of the
equation to subgraphs of the E-Graph. If an equation matches, the optimizer then
inserts the expression on the other side of the equation to the graph. For
example, applying the associative rule extends the left graph in
Figure~\ref{assoc} with $X*(Y*Y)$, resulting in the right graph.
Figure~\ref{eqsat} shows the pseudo code for this process. While inserting new
expressions, the optimizer checks if any subexpression of the new expression is
already in the graph. If so, it reuses the existing node, thereby exploiting all
possible common-subexpressions to keep the E-Graph compact. In
Figure~\ref{assoc}, only two $*$ are added since the variables $X$ and $Y$ are
already in the graph. Once the entire new expression has been added, the
optimizer then merges the newly created E-Class at its root with the E-Class
containing the matched expression, asserting them equal. Importantly, the
optimizer also propagates the congruent closure of this new equality. For
example, when $A+A$ is merged with $2*A$, the optimizer also merges $(A+A)^2$
with $(2*A)^2$. Figure~\ref{fig:egraphadd} shows the pseudo code for adding an
expression to E-Graph. This process of match-and-insert is repeated until the
graph stops changing, or reaching a user-specified bound on the number of 
saturation iterations. If this process does converge, that means no rule can add new
expressions to the graph any more. If the set of rules are complete, as is our
$R_{EQ}$, convergence of saturation implies the resulting E-Graph represents
the transitive closure of the equality rules applied to the initial expression.
In other words, it contains \textit{all} expressions equivalent to the input under the equality rules.

\begin{figure}
\centering
\begin{minipage}{0.5\textwidth}
\begin{lstlisting}[language=Python]
def saturate(egraph, equations):
  for eq in equations: 
    matches = egraph.match(eq.lhs)
    for eclass in matches: 
      ec = egraph.add(eq.rhs)
      egraph.merge(eclass, c)
\end{lstlisting}
\end{minipage}
    \caption{Pseudo code for saturating the E-Graph given a set of equalities.
      \texttt{match} returns the IDs of the root class of any matching subgraph;
      \texttt{merge} combines two E-Classes given their IDs; it also propagates
      the congruent closure of the new equality.
      Figure~\ref{fig:egraphadd} defines \texttt{add}.}
    \label{eqsat}
\end{figure}
\begin{figure}
\centering
\begin{minipage}{0.5\textwidth}
\begin{lstlisting}[language=Python]
def add(expr):
  ID = egraph.find(expr)
  if ID != NULL:
    return ID
  else:
    cids = expr.children.map(add)
    ID = egraph.insert(expr.op, cids)
    return ID
\end{lstlisting}
\end{minipage}
    \caption{Pseudo code for adding an expression to the E-Graph. \texttt{find}
      looks for the given expression in the E-Graph, and returns its root ID if
      it already exists, or \texttt{NULL} otherwise. \texttt{insert} adds the
      given operator to the E-Graph, and points its children to E-Classes with the
      given class IDs. }
    \label{fig:egraphadd}
\end{figure}

The outer loop that matches equations to the graph can be implemented by a more
efficient algorithm like the Rete algorithm \cite{DBLP:journals/ai/Forgy82} when
the number of equations is large. However, we did not find matching to be
expensive and simply match by traversing the graph. Our implementation uses the
E-Graph data structure from the \texttt{egg}~\cite{EGG} library. 

\subsubsection*{Dealing with Expansive Rules}
While in theory equality saturation will converge with well-constructed rewrite
rules, in practice the E-Graph may explode for certain inputs under certain rules. For example, a long
chain of multiplication can be rewritten to an exponential number of
permutations under associativity and commutativity (AC rules). If we 
apply AC rules everywhere applicable in each iteration, the graph would soon use
up available memory. We call this application strategy the \emph{depth-first}
strategy because it eagerly applies expansive rules like AC. 
AC rules by themselves rarely affect performance, and
SystemML also provides the fused \texttt{mmchain} operator that efficiently computes
multiplication chains, so permuting a chain is likely futile. In practice, AC
rules are useful because they can enable other rewrites. Suppose we have a rule
$R_{factor}: A*X + B*X \rightarrow (A+B)*X$ and an expression $U*Y + Y*V$.
Applying commutativity to $Y*V$ would then transform the expression to be
eligible for $R_{factor}$. With this insight, we change each saturation
iteration to sample a limited number of matches to apply per rule, instead of
applying all matches. This amounts to adding \texttt{matches = sample(matches, limit)} between line 3 and line 4 in Figure~\ref{fig:egraphadd}. 
Sampling encourages each rule to be considered equally often
and prevents any single rule from exploding the graph. This helps ensure good
exploration of the search space when exhaustive search is impractical. 
But when it is
possible for saturation to converge and be exhaustive, it still converges with high probability
when we sample matches. Our experiments in Section~\ref{overhead} show sampling
always preserve convergence in practice.

\subsubsection*{Extracting the Optimal Plan}

\begin{figure}
\centering
    \includegraphics[scale=0.6]{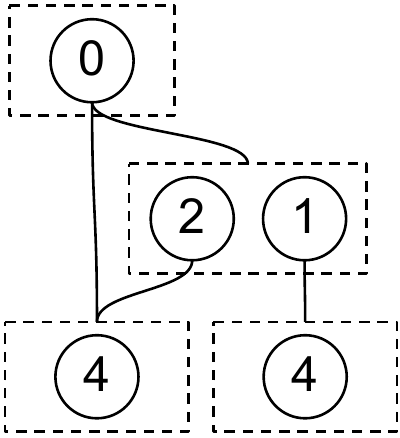}
    \caption{The CSE problem. Each node shows its cost. A greedy optimizer picks nodes with costs 0, 1 and both nodes with cost 4; but the optimal plan uses nodes with costs 0, 2 and shares the same node with cost 4.}
    \label{cse}
\end{figure}

A greedy strategy to extract the best plan from the saturated E-Graph is to
traverse the graph bottom-up, picking the best plan at each level. This assumes
the best plan for any expression also contains the best plan for any of its
subexpressions. However, the presence of common subexpressions breaks this
assumption. In Figure~\ref{cse} each operator node is annotated with its cost. 
Between the nodes with costs 1 and 2, a greedy strategy would choose 1, which
incurs total cost of $1+4=5$. The greedy strategy then needs to pick the root
node with cost 0 and the other node with cost 4, incurring a total cost of 9. 
However, the optimal strategy is to pick the nodes with 0, 2 and share the same
node with cost 4, incurring a total cost of 6. 

We handle the complexity of the search problem with a constraint
solver. We assign a variable to each operator and each E-Class, then construct
constraints over the variables for the solver to select operators that make up a
valid expression. The solver will then optimize a cost function defined over the
variables; the solution then corresponds to the optimal expression equivalent to
the input.

\subsubsection*{Constraint Solving and Cost Function}
We encode the problem of extracting the cheapest plan from the E-Graph with
integer linear programming (ILP). Figure~\ref{ilp} shows this encoding. For each
operator in the graph, we generate a boolean variable $B_{op}$; for each E-Class
we generate a variable $B_c$. For the root class, we use the variable $B_r$.
Constraint $F(op)$ states that if the solver selects an operator, it must also
select all its children; constraint $G(c)$ states that if the solver selects an
E-Class, it must select at least one of its members. Finally, we assert $B_r$ must 
be selected, which constrains the extracted expression
to be in the same E-Class as the unoptimized expression. These three constraints together
ensure the
selected nodes form a valid expression equivalent to the unoptimized input.
Satisfying these constraints, the solver now minimizes the cost function given
by the total cost of the selected operators. Because each $B_{op}$ represents an
operator node in the E-Graph which can be shared by multiple parents, this
encoding only assigns the cost once for every shared common subexpression. In
our implementation, we use Gurobi~\cite{gurobi} to solve the ILP problem.

\begin{figure}
    \begin{align*}
Constraints &\equiv B_r \wedge \bigwedge_{op} F(op) \wedge \bigwedge_c G(c)
\\ F(op) &\equiv B_{op} \rightarrow \bigwedge_{c \in op.children} B_c \\ G(c)
&\equiv B_c \rightarrow \bigvee_{op \in c.nodes} B_{op}
\end{align*}
\[\textbf{minimize} \sum_{op} B_{op} \cdot C_{op} \textbf{ s.t. } Constraints\]

    \caption{ILP constraint and objective for extraction. }
    \label{ilp}
\end{figure}

Each operation usually has cost proportional to the output size in terms of
memory allocation and computation. Since the size of a matrix is proportional
its the number of non-zeroes (nnz), we use SystemML's estimate of nnz as the
cost for each operation. Under our relational interpretation, this corresponds
to the cardinality of relational queries. We use the simple estimation scheme in
Figure~\ref{fig:cost}, which we find to work well. Future work can hinge on the
vast literature on sparsity and cardinality estimation to improve the cost
model.

\begin{figure}
\begin{align*}
    \textbf{S}[X*Y] &= min(\textbf{S}[X], \textbf{S}[Y]) \\ \textbf{S}[X+Y] &=
    min(1, \textbf{S}[X]+\textbf{S}[Y]) \\ \textbf{S}[\sum_i X] &= min(1, |i|
    \cdot \textbf{S}[X])
\end{align*}
    \caption[Sparsity estimation]{Sparsity estimation. We define $sparsity = nnz
      / size$, i.e. a 0 matrix has sparsity $0.0$\protect\footnotemark. $|i|$ is
      the size of the aggregated dimension. }
    \label{fig:cost}
\end{figure}
\footnotetext{Some may find this definition counter-intuitive; we define it so
  to be consistent with SystemML.}

\subsection{Schema and Sparsity as Class Invariant}
In the rules $R_{EQ}$ used by the saturation process, Rule~(\ref{RRC_ma}) If $i
\not\in A$, $A * \sum_i B = \sum_i (A * B)$ contains a condition on attribute
$i$ which may be deeply nested in the expression. This means the optimizer
cannot find a match with a simple pattern match. Fortunately, all expressions in
the same class must contain the same set of free attributes (attributes not
bound by aggregates). In other words, the set of free variables is invariant
under equality. This corresponds precisely to the schema of a database -
equivalent queries must share the same schema. We therefore annotate each class
with its schema, and also enable each equation to match on the schema.

In general, we find class invariants to be a powerful construct for programming
with E-Graphs. For each class we track as class invariant if there is a constant
scalar in the class. As soon as all the children of an operator are found to
contain constants, we can fold the operator with the constant it computes. This
seamlessly integrates constant folding with the rest of the rewrites. We also
treat sparsity as a class invariant and track it throughout equality saturation.
Because our sparsity estimation is conservative, equal expressions that use
different operators may have different estimates. But as soon as we identify
them as equal, we can merge their sparsity estimates by picking the tighter one,
thereby improving our cost function. Finally, we also take advantage of the
schema invariant during constraint generation. Because we are only interested in
RA expressions that can be translated to LA, we only generate symbolic variables
for classes that have no more than two attributes in their schema. This prunes
away a large number of invalid candidates and helps the solver avoid wasting
time on them. We implement class invariants using \texttt{egg}'s Metadata API.

\subsection{Translation, Operator Fusion and Custom Functions}
Since equality saturation can rewrite any expression given a set of equations,
we can directly perform the translation between LA and RA within saturation,
simply by adding the translation rules $R_{LR}$ from Figure~\ref{RMR}.
Furthermore, saturation has flexible support for custom functions. The simplest
option is to treat a custom functions as a black box, so saturation can still
optimize below and above them. With a little more effort, we have the option to
extend our equations $R_{EQ}$ to reason about custom functions, removing the
optimization barrier. We take this option for common operators that are not part
of the core RA semantics, e.g. square, minus and divide. In the best scenario,
if the custom function can be modeled by a combination of basic operators, we
can add a rule equating the two, and retain both versions in the same graph for
consideration. In fact, this last option enables us to encode fused operators
and seamlessly integrate fusion with other rewrite rules. As a result, the
compiler no longer need to struggle with ordering fusion and rewrites, because
saturation simultaneously considers all possible ordering.

\subsection{Saturation v.s. Heuristics}

Using equality saturation, SPORES elegantly remedies the drawbacks of
heuristics mentioned in the beginning of section~\ref{explore}. First, when two
or more conflicting rewrites apply, they would be added to the same E-Class, and
the extraction step will pick the more effective one based on the global cost
estimate. Second, there is no need to carefully order rewrites, because
saturation simultaneously considers all possible orders. For example, when rules
$R_1$ and $R_2$ can rewrite expression $e$ to either $R_1(R_2(e))$ or
$R_2(R_1(e))$, one iteration of saturation would add $R_1(e)$ and $R_2(e)$ to
the graph, and another iteration would add both $R_1(R_2(e))$ and $R_2(R_1(e))$
to the same E-Class. Third, rules do not need to reason about their dependency on
input or program properties, because extraction uses a global cost model that
holistically incorporates factors like input sparsity and common subexpressions.
Finally, every rule application in saturation applies one step of rewrite on top
of those already applied, naturally composing complex rewrites out of simple
ones.

\subsection{Integration within SystemML}

We integrate SPORES into SystemML to leverage its compiler
infrastructure. Figure~\ref{warp} shows the architecture of the integrated
system: the optimizer plugs into the algebraic rewrite pass in SystemML. It
takes in a DAG of linear algebra operations, and outputs the optimized DAG.
Within the optimizer, it first translates the LA DAG into relational algebra,
performs equality saturation, and finally translates the optimal expression back
into LA. We obtain matrix characteristics such as dimensions and sparsity
estimation from SystemML. Since we did not focus our efforts in supporting
various operators and data types unrelated to linear algebra computation (e.g.
string manipulation), we only invoke SPORES on important LA expressions
from the inner loops of the input program.

\begin{figure}
    \centering \includegraphics[width=0.6\linewidth]{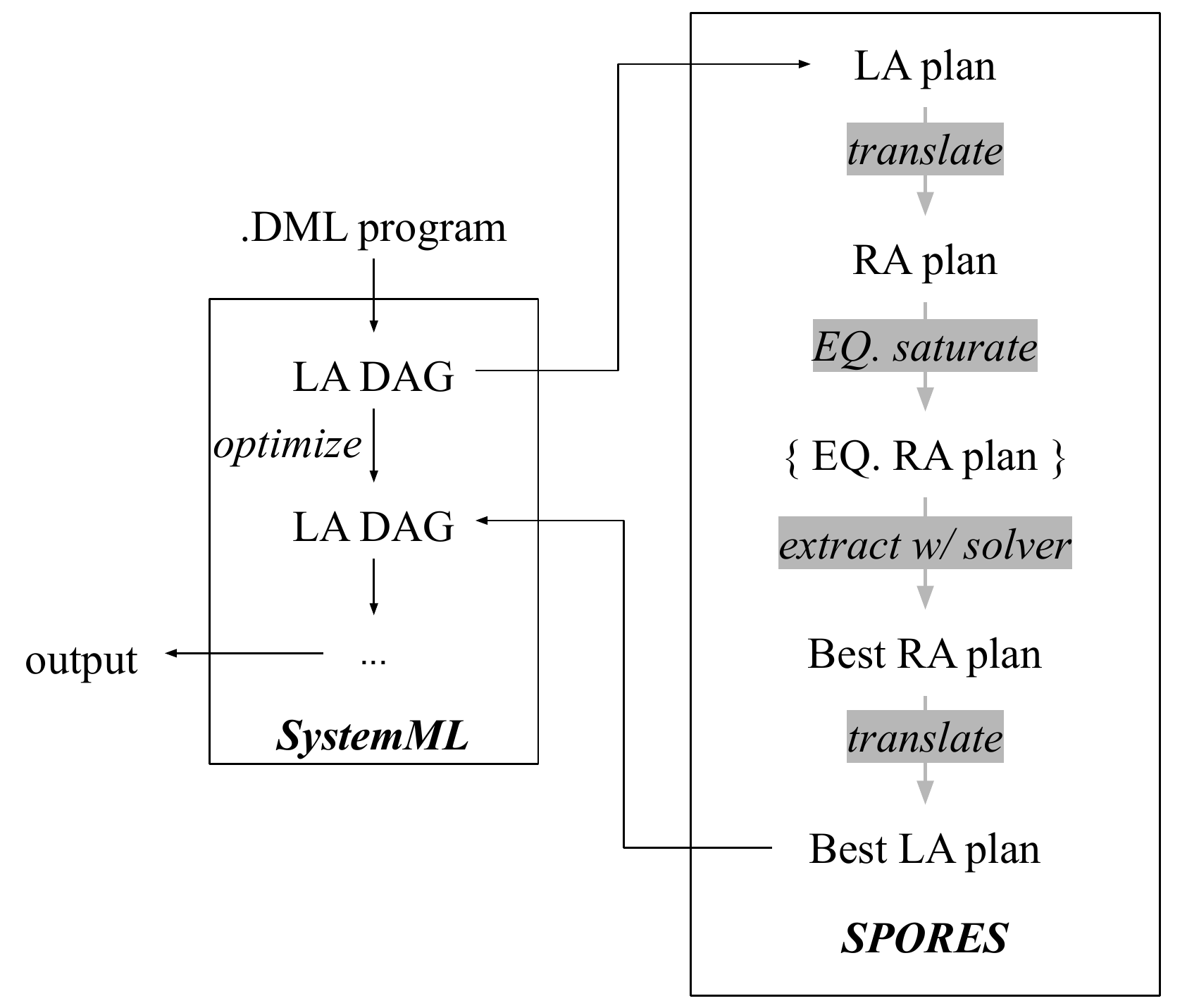}
    \caption{Architecture of SPORES and integration within SystemML}
    \label{warp}
\end{figure}{}

\section{Evaluation} \label{sec:evaluation}
We evaluate SPORES to answer three research questions about our
approach of relational equality saturation: 
\begin{itemize}
    \item \textbf{Section~\ref{relcomp}}: can SPORES derive hand-coded
      rewrite rules for sum-product optimization?
    \item \textbf{Section~\ref{perf}}: can SPORES find optimizations that
      lead to greater performance improvement than hand-coded rewrites and
      heuristics? 
      \item \textbf{Section~\ref{overhead}}: does SPORES induce compilation overhead afforded by its performance gain?  
\end{itemize}

We ran experiments on a single node with Intel E74890 v2 @ 2.80GHz with hyper-threading, 
1008 GB RAM, 8TB
disk, and Ubuntu 16.04.6. We used OpenJDK 1.8.0, Apache Hadoop 2.7.3, and Apache
Spark 2.4.4. Spark was configured to run locally with 6 executors, 8
cores/executor, 50GB driver memory, and 100GB executor memory. Our baselines are
from Apache SystemML 1.2.0. All datasets have been synthetically generated to
evaluate a range of scenarios, where we used algorithm specific data generators
from SystemML's benchmark suite.

\subsection{Completeness of Relational Rules} \label{relcomp}

\begin{figure}
    \centering
    \begin{tabular}{|l|c|l|}
    \hline
     Method Name & \# & Example Rewrite \\
     \hline
     
\verb|UnnecessaryOuterProduct| & 3 & \verb|X*(Y%*%1) ->| \verb| X*Y, if Y col vector |\\

\verb|ColwiseAgg| & 3 & \verb|colsums(X) -> sum(X) or X, if col/row vector |\\

\verb|RowwiseAgg| & 3 & \verb|rowsums(X) -> sum(X) or X, if row/col vector |\\

\verb|ColSumsMVMult| & 1 & \verb|colSums(X*Y) -> t(Y) %*% X, if Y col vector |\\

\verb|RowSumsMVMult| & 1 & \verb|rowSums(X*Y) -> X %*% t(Y), if Y row vector |\\

\verb|UnnecessaryAggregate| & 9 & \verb|sum(X) -> as.scalar(X), if 1x1 dims |\\
\verb|EmptyAgg| & 3 & \verb|sum(X) -> 0, if nnz(X)==0 |\\
\verb|EmptyReorgOp| & 5 & \verb|t(X) -> matrix(0, ncol(X), nrow(X)) if nnz(X)==0|\\
\verb|EmptyMMult| & 1 & \verb|X%*%Y -> matrix(0,...), if nnz(Y)==0|\\
\verb|IdentityRepMatrixMult| & 1 & \verb|X%*%y -> X if y matrix(1,1,1) |\\
\verb|ScalarMatrixMult| & 2 & \verb|X%*%y -> X*as.scalar(y), if y is a 1-1 matrix |\\
\verb|pushdownSumOnAdd| & 2 & \verb|sum(A+B) -> sum(A)+sum(B) if dims(A)==dims(B) |\\

\verb|DotProductSum| & 2 & \verb|sum(v^2) -> t(v)%*%v if ncol(v)==1 |\\
\verb|reorderMinusMatrixMult| & 2 & \verb|(-t(X))%*%y -> -(t(X)%*%y) |\\

\verb|SumMatrixMult| & 3 & \verb|sum(A%*%B) -> sum(t(colSums(A))*rowSums(B)) |\\
\verb|EmptyBinaryOperation| & 3 & \verb|X*Y -> matrix(0,nrow(X),ncol(X)) / X+Y->X / X-Y->X |\\
\verb|ScalarMVBinaryOperation| & 1 & \verb|X*y -> X*as.scalar(y), if y is a 1-1 matrix |\\
\verb|UnnecessaryBinaryOperation| & 6 & \verb|X*1 -> X (after rm unnecessary vectorize) |\\
\verb|BinaryToUnaryOperation| & 3 & \verb|X*X -> X^2, X+X -> X*2, (X>0)-(X<0) -> sign(X) |\\
\verb|MatrixMultScalarAdd| & 2 & \verb|eps+U%*%t(V) -> U%*%t(V)+eps|\\
\verb|DistributiveBinaryOperation| & 4 & \verb|(X-Y*X) -> (1-Y)*X |\\
\verb|BushyBinaryOperation| & 3 & \verb|(X*(Y*(Z%*%v))) -> (X*Y)*(Z%*%v) |\\
\verb|UnaryAggReorgOperation| & 3 & \verb|sum(t(X)) -> sum(X) |\\
\verb|UnnecessaryAggregates| & 8 & \verb|sum(rowSums(X)) -> sum(X) |\\
\verb|BinaryMatrixScalarOperation| & 3 & \verb|as.scalar(X*s) -> as.scalar(X)*s |\\

\verb|pushdownUnaryAggTransposeOp| & 2 & \verb|colSums(t(X)) -> t(rowSums(X)) |\\
\verb|pushdownCSETransposeScalarOp| & 1 & \verb|a=t(X), b=t(X^2) -> a=t(X), b=t(X)^2 for CSE t(X) |\\

\verb|pushdownSumBinaryMult| & 2 & \verb|sum(lamda*X) -> lamda*sum(X) if lamdba is scalar|\\
\verb|UnnecessaryReorgOperation| & 2 & \verb|t(t(X))->X potentially introduced by other rewrites |\\
\verb|TransposeAggBinBinaryChains| & 2 & \verb|t(t(A)%*%t(B)+C) -> B%*%A+t(C) |\\
\verb|UnnecessaryMinus| & 1 & \verb|-(-X)->X potentially introduced by other rewrites |\\
\hline
\end{tabular}{}
    \caption{Sum-product rewrites in SystemML. The first column lists the name for each rewrite method. Each 
    method implements a number of rewrite patterns, and the second column shows how many. The last column shows an example rewrite for each method. Following SystemML's notation, scalars are in lower-case and matrices/vectors in upper-case. \texttt{\%*\%} is matrix multiply,  \texttt{t()} is transpose, and \texttt{nnz(X)} is the number of non-zeroes in $X$. 
    Equality saturation derives rewrites form all 31 methods (84 patterns) with relational  rules. }
    \label{rewritecomplete}
\end{figure}{}

Theoretically, our first hypothesis is validated by the fact that our relational
equality rules are complete w.r.t. linear algebra semantics. To test completeness in
practice\footnote{\textit{``I have only proved it correct, not tried it''} --
  Donald Knuth}, our first set of experiments check if SPORES
can derive the hand-coded sum product rewrite rules in SystemML. To do this, we
input the left hand side of each rule into SPORES, perform equality
saturation, then check if the rule's right hand side is present in the saturated
graph. The optimizer is able to derive
all 84 sum-product rewrite rules in SystemML using relational equality rules. 
See Figure~\ref{rewritecomplete} for a list of these rewrites. 
We believe replacing the 84 ad-hoc
rules with our translation rules $R_{LR}$ and equality rules $R_{EQ}$ would greatly
simplify SystemML's codebase. Together with equality saturation, our relational
rules can also lead to better performance, as we demonstrate in the next set of
 experiments.



\subsection{Run Time Measurement} \label{perf}

\begin{figure*}
    \centering  \includegraphics[width=\textwidth]{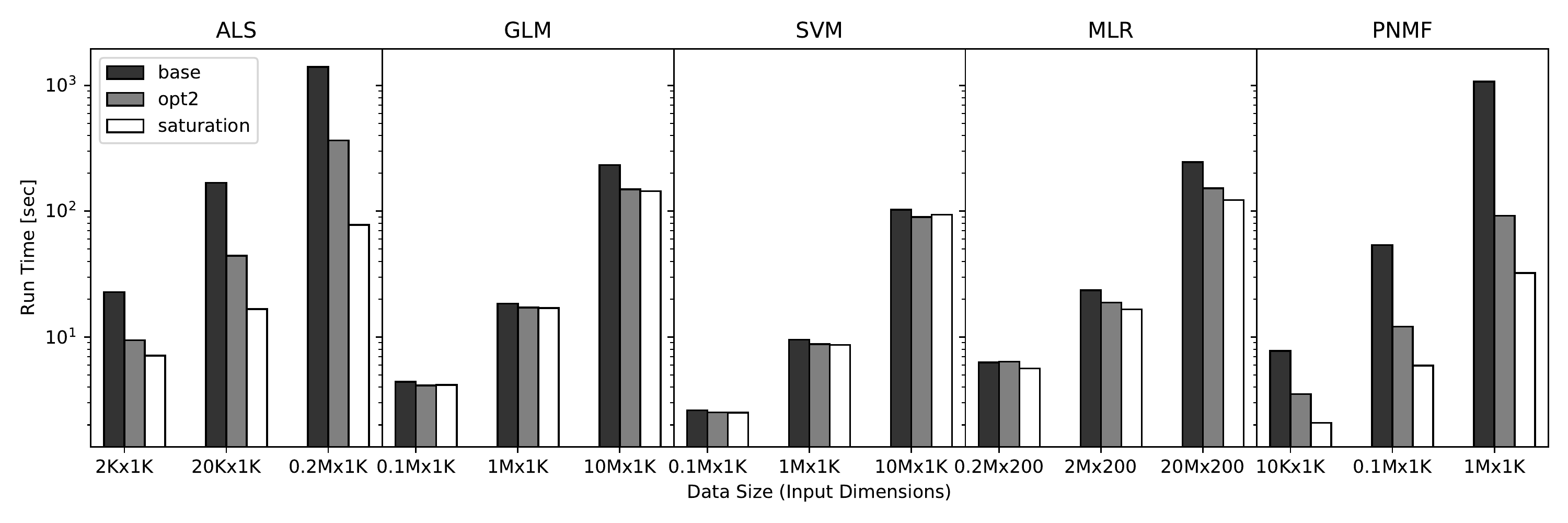}
    \caption{Run time of programs compiled by different optimizers. }
    \label{eval}
\end{figure*}

\begin{figure*}
    \includegraphics[width=\textwidth]{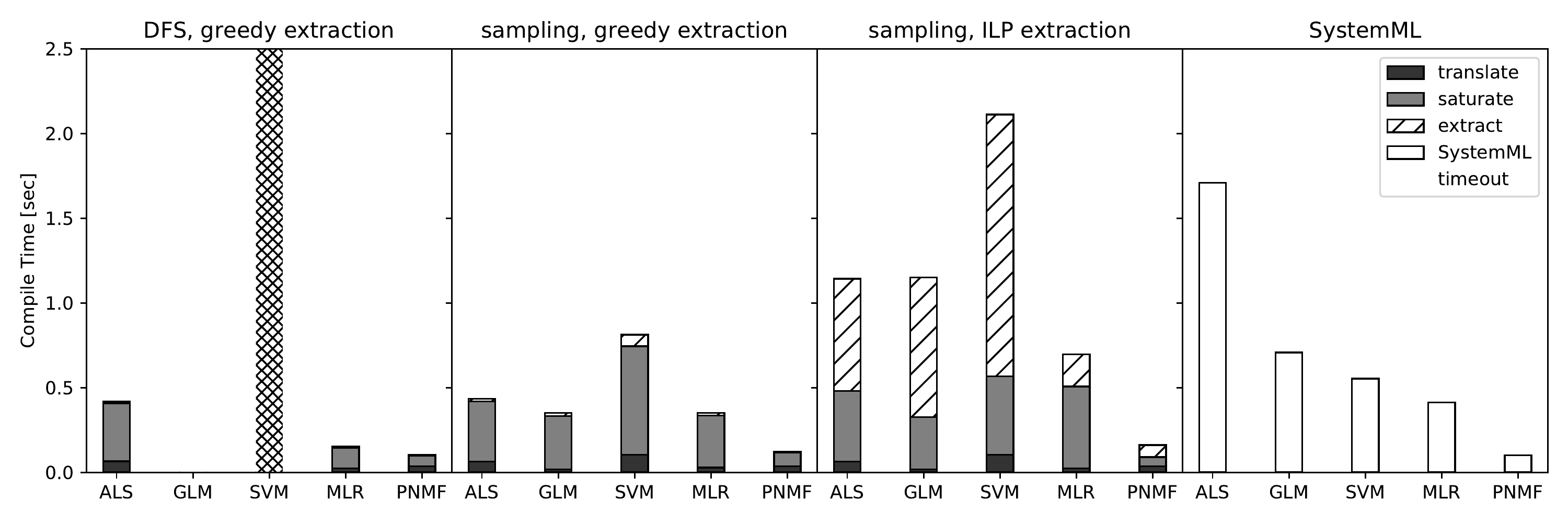}
    \caption{Compile time breakdown for different saturation and extraction
      strategies. Depth-first saturation reaches the 2.5s timeout compiling GLM and SVM. }
    \label{compile}
\end{figure*}{}
We compare SPORES against SystemML's native optimizations for their
performance impact. In particular, we run SystemML with optimization level 2 (\texttt{opt2}),
which is its default and includes all advanced rewrites like constant folding
and common subexpression elimination. We additionally enable SystemML's native
sum-product rewrites and operator fusion. 
For baseline (\texttt{base}) we use level 1 optimization from SystemML, since level 0 (no optimization)
timeouts on almost all input sizes. \texttt{base} includes no advanced rewrites, 
sum-product optimization or operator fusion; it only performs local optimizations
with basic pattern-matching. 
We compile and execute 5 real-world
algorithms under 3 configurations:
\begin{enumerate*}
    \item SystemML without optimization,
    \item SystemML with optimization configured as above, and
    \item our equality saturation optimizer.
\end{enumerate*}{}
 The algorithms include Generalized Linear Model (GLM), Multinomial
 Logistic Regression (MLR), Support Vector Machine (SVM), Poisson
 Nonnegative Matrix Factorization (PNMF), and Alternating Least Square
 Factorization (ALS). We take the implementation of these algorithms from
 SystemML's performance benchmark suite.

Figure~\ref{eval} shows the performance improvement for each optimization
setting. Overall, equality saturation is competitive with the hand-coded rules
in SystemML: for GLM and SVM, saturation discovers the same
optimizations that improve performance as SystemML does. For ALS, MLR and PNMF,
saturation found new optimizations that lead to speedups from 1.2X to 5X
as compared to SystemML. We analyze each benchmark in
detail in the following paragraphs.



For \textbf{ALS}, SPORES leads to up to 5X speedup beyond SystemML's
optimizations using our relational rules. Investigating the optimized code
reveals the speedup comes from a rather simple optimization: SPORES expands
$(UV^T - X)V$ to $UV^TV-XV$ to exploit the sparsity in $X$. Before the optimization,
all three operations (2 matrix multiply and 1 minus) 
in the expression create dense intermediates because $U$
and $V$ are dense. After the optimization, $XV$ can be computed efficiently thanks to
the sparsity in $X$. $UV^TV$ can be computed in one go without intermediates, taking advantage of
SystemML's \texttt{mmchain} operator for matrix multiply chains. Although the
optimization is straightforward, it is counter-intuitive because one expects
computing $A(B + C)$ is more efficient than $AB + AC$ if one does not consider
sparsity. For the same reason, SystemML simply does not consider distributing
the multiplication and misses the optimization.

For \textbf{PNMF}, the speedup of up to 3X using RA rules attributes to rewriting $sum(WH)$
to $sum_{col}(W) \cdot sum_{row}(H)$ which avoids materializing a dense
intermediate $WH$. Interestingly, SystemML includes this rewrite rule but did
not apply it during optimization. In fact, SystemML only applies the rule when
$WH$ does not appear elsewhere, in order to preserve common subexpression.
However, although $WH$ is shared by another expression in PNMF, the other
expression can also be optimized away by another rule. Because both rules uses heuristics to favor sharing CSE, neither fires. 
This precisely
demonstrates the limitation of heuristics. 

For \textbf{MLR}, the important optimization\footnote{Simplified here for
  presentation. In the source code $P$ and $X$ are not variables but consist of
  subexpressions. } by saturation is $P * X - P * sum_{row}(P) * X$ to
$P*(1-P)*X$, where $P$ is a column vector. This takes advantage of the
\texttt{sprop} fused operator in SystemML to compute $P*(1-P)$, therefore
allocating only one intermediate. Note that the optimization factors out $P$, which
is the exact opposite to the optimization in ALS that distributes multiply. Naive
rewrite rules would have to choose between the two directions, or resort to
heuristics to decide when to pick which. 


For \textbf{SVM} and \textbf{GLM}, equality saturation finds the same optimizations as
SystemML does, leading to speedup mainly due to operator fusion. Upon
inspection, we could not identify better optimizations for \textbf{SVM}. For
\textbf{GLM}, however, we discovered a manual optimization that should improve
performance in theory, but did not have an effect in practice since SystemML
cannot accurately estimate sparsity to inform execution. 

\subsection{Compilation Overhead} \label{overhead}
\begin{figure*}
    \centering  \includegraphics[width=\textwidth]{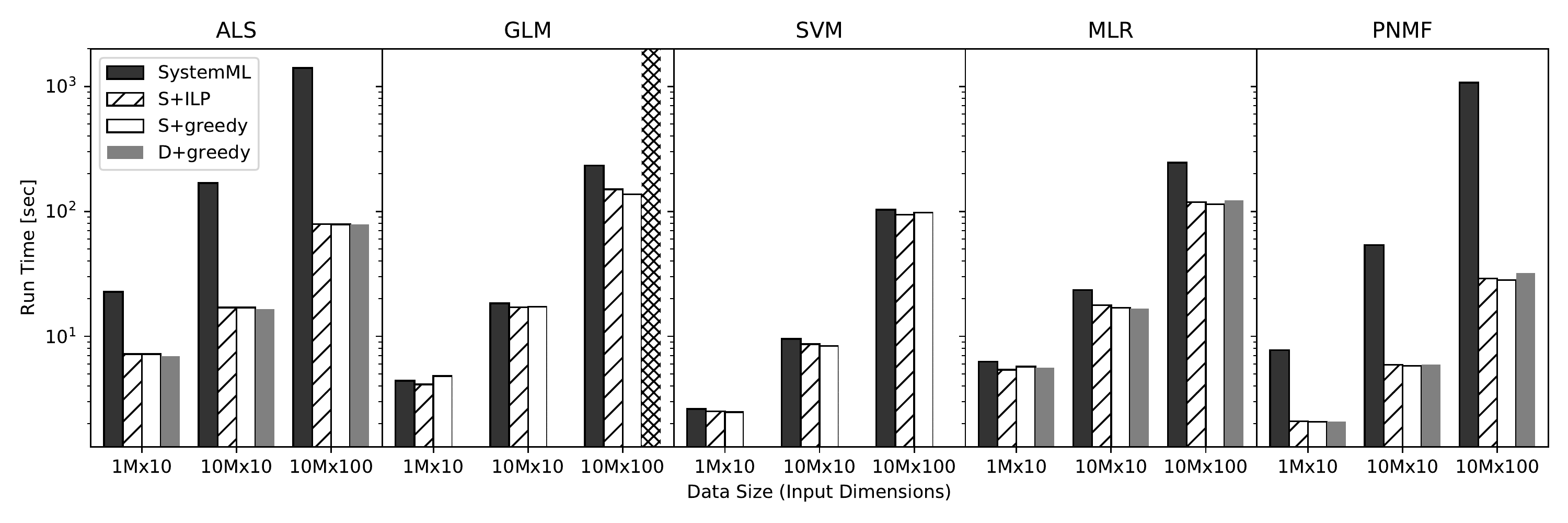}
    \caption{Performance impact of different saturation and extraction
      strategies. S is saturation with sampling, and D is depth-first saturation. Depth-first saturation runs into timeout compiling GLM and SVM. }
    \label{sampleeval}
\end{figure*}
In our initial experiments, SPORES induces nontrivial compilation
overhead compared to SystemML's native rule-based rewrites. Figure~\ref{compile}
 (sampling, ILP extraction) shows the compile time breakdown for each benchmark, and the majority of time is
spent in the ILP solver. We therefore experiment with a greedy algorithm during
extraction to see if we can trade off guarantees of optimality for a shorter
compile time. This algorithm traverses the saturated graph bottom-up, picking
the cheapest operator in each class at every level. Figure~\ref{sampleeval}
shows the performance impact of greedy extraction, and Figure~\ref{compile} (sampling, greedy extraction)
shows the compile time with it. Greedy extraction significantly reduces compile
time without sacrificing \textit{any} performance gain! This is not surprising
in light of the optimizations we discussed in Section~\ref{perf}: all of
these optimizations improve performance regardless of common
subexpressions, so they are selected by both the ILP-based and the greedy
extractor.

We also compare saturation with sampling against depth-first saturation in terms of
performance impact and compile time. Recall the depth-first saturation strategy
applies all matches per rule per iteration. As Figure~\ref{compile} shows,
sampling is slightly slower for ALS, MLR and PNMF, but resolves the timeout for 
GLM and SVM. This is because sampling takes longer to converge when full
saturation is possible, and otherwise prevents the graph from blowing up before
reaching the iteration limit.
Indeed, saturation converges for ALS, MLR and PNMF, which means SPORES
can guarantee the optimality of its result under the given cost model. Saturation
does not not converge before reaching the iteration limit for GLM and SVM because
of deeply nested $*$ and $+$ in the programs. 
Convergence
may come as a surprise despite E-Graph's compaction -- expansive rules like
associativity and commutativity commonly apply in practice. However, the expression
DAGs we encounter are often small (no more than 15 operators), and large DAGs
are cut into small pieces by optimization barriers like uninterpreted functions. 

Figure~\ref{compile} compares the overall DAG compilation overhead of SystemML
against SPORES with different extraction strategies.
Note that the overhead of SystemML also includes optimizations unrelated to sum-product rewrites that are difficult to disentangle, therefore it only gives a sense of the base compilation time and does not serve as head-to-head comparison against SPORES. 
Although SPORES induces significant compilation overhead in light of SystemML's total DAG compilation time, the overhead is afforded by
its performance gain. As we did not focus our efforts on
reducing compile time, we believe there is plenty room for improvement, for
example organizing rewrite rules to speed up saturation.

\section{Related Work}
There is a vast body of literature for both relational query optimization and
optimizing compilers for machine learning. Since we optimize machine learning programs through a
relational lens, our work relates to research in both fields. As we have pointed
out, numerous state-of-the-art optimizing compilers for machine learning
 resort to syntactic rewrites and heuristics to optimize linear algebra
expressions~\cite{DBLP:reference/bdt/Boehm19}~\cite{DBLP:conf/icml/SujeethLBRCWAOO11}~\cite{DBLP:conf/sigmod/HuangB013}. We distinguish our work which performs optimization based on a
relational semantics of linear algebra and holistically explore the complex
search space. A majority of relational query optimization focus on join order
optimization \cite{Graefe95a} \cite{MoerkotteN06} \cite{MoerkotteN08}
\cite{selinger1979access}; we distinguish our work which optimizes programs with
join (product), union (addition), and aggregate (sum) operations. Sum-product optimization considers operators other than join while optimizing
relational queries. Recent years have seen a line of excellent theoretical and
practical research in this area \cite{KhamisNR16} \cite{Joglekar2016AJARAA}.
These work gives significant improvement for queries involving $*$ and $\sum$,
but fall short of LA workloads that occur in practice. We step past these
frameworks by incorporating common subexpressions and incorporating addition
($+$). 

In terms of approach, our design of relational IR ties in to research that
explores the connection between linear algebra and relational algebra. Our
design and implementation of the optimizer ties into research that leverage
equality saturation and \textsc{and-or dag}s for query optimization and compiler
optimization for programming languages. Since we focus on optimizing sum-product
expressions in linear algebra, our work naturally relates to research in sum-product
optimization. We now discuss these three lines of research in detail. 

\subsection{Relational Algebra and Linear Algebra}
Elgamal et al. \cite{ElgamalLBETRS17} envisions \textsc{spoof}, a compiler for machine learning programs
that leverages relational query optimization techniques for LA sum-product optimization. 
We realize this vision by providing the translation
rules from LA to RA and the relational equality rules that completely represents
the search space for sum-product expressions. One important distinction is, Elgamal
et al. proposes \emph{restricted relational algebra} where every expression must
have at most two free attributes. This ensures every relational expression in every step of the
optimization process to be expressible in LA. In contrast, we remove this restriction and only require the 
optimized output to be in linear algebra. This allows us to trek out to spaces not
covered by linear algebra equality rules and achieve completeness. In addition 
to sum-product expressions, Elgamal et al. also considers selection and projection operations
like selecting the positive entries of a matrix. We plan to explore supporting
selection and projection in the future. Elgamal et al. also proposes compile-time generation of
fused operators, which is implemented by Boehm et al.~\cite{DBLP:journals/pvldb/BoehmRHSEP18}.  SPORES readily takes advantage of existing fused operators, and we
plan to explore combining sum-product rewrite with fusion generation in the future. 

MorpheusFI by Li et al.~\cite{DBLP:conf/sigmod/LiC019} and LARA by Hutchison et al.~\cite{DBLP:journals/corr/HutchisonHS17} explore optimizations across the interface of machine learning and database systems. In particular, MorpheusFI speeds up machine learning algorithms over large joins by pushing computation into each joined table, thereby avoiding expensive
materialization. LARA implements linear algebra operations with relational operations and
shows competitive optimizations alongside popular data processing systems. 
Schleich et al.\cite{DBLP:conf/sigmod/SchleichOC16} and Khamis et al.\cite{DBLP:journals/corr/NgoNOS17} explore in-database learning, which aims to push entire
machine learning algorithms into the database system. 
We contribute in this space by showing that even without a relational engine, the
relational abstraction can still benefit machine learning tasks as a powerful
intermediate abstraction. 

\subsection{Equality Saturation and AND-OR DAGs}
Equality saturation and \textsc{and-or dag}s have been applied to optimize
low-level assembly code~\cite{DBLP:conf/pldi/JoshiNR02}, Java programs \cite{DBLP:journals/corr/abs-1012-1802}, 
database queries \cite{Graefe95a}, floating point arithmetics \cite{DBLP:conf/pldi/PanchekhaSWT15}, and even computer-aided
design models \cite{DBLP:journals/corr/abs-1909-12252}. The design of our relational IR brings unique challenges
in adopting equality saturation. Compared to database query optimizers 
that focus on optimizing join orders, unions and aggregates play a central
role in our relational IR and are prevalent in real-world programs. As
a result, our
equality rules depend on the expression schema which is not immediately accessible
from the syntax. We propose class invariants as a solution to access schema
information, and show it to be a powerful construct that enables constant
folding and improves cost estimation. Compared to optimizers for low-level
assembly code or Java program, we commonly encounter linear algebra expressions
that trigger expansive rules and make saturation convergence impractical. 
We propose to sample rewrite matches in order to achieve good exploration
without full saturation. Equality saturation takes advantage of constraint solvers which have also been applied to program optimization and query
optimization. In particular, the use of solvers for satisfiability
modulo theories by \cite{DBLP:conf/asplos/Solar-LezamaTBSS06} has spawned a paradigm now known as \textit{program synthesis}. 
In query optimization research, 
\cite{DBLP:conf/sigmod/Trummer017} applies Mixed Integer Linear Programming for optimizing join ordering. 
Although constraint solvers offer pleasing guarantees of optimality, our
experiments show their overhead does not bring significant gains for optimizing
LA expressions. 




\section{Conclusion}

We propose a novel optimization approach for compiling linear algebra programs,
using relational algebra as an intermediate representation and equality
saturation to explore the search space. We implement our equality saturation
based optimizer and show it is effective for improving real-world machine
learning algorithms.

\section{Acknowledgement}
The authors would like to thank Alexandre Evfimievski, Matthias Boehm, and Berthold Reinwald for their insights in developing SystemMLs internals. We would also like to thank Mathew Luo and Brendan Murphy for their guidance in piloting an ILP extractor as well as Zach Tatlock, Max Willsey and Chandrakana Nandi for their valuable feedback and discussion.

\balance

\bibliographystyle{abbrv}
\bibliography{vldb2020}

\newpage

\appendix
\section{Uniqueness of RA Canonical Form}

To prove Lemma~\ref{lemma:unique:nf} (\textit{uniqueness of canonical form}), we first give formal definitions for several important constructs. First, we interpret a tensor (high dimensional matrix) as a function from a tuple of indices to a real number: 
\begin{definition}{{\em (Tensors)}}
A {\em \textbf{tensor}} is a function $A: [N]^d \rightarrow \R$ such that $A(i,j,\dots)$ returns the tensor entry $A_{ij\dots}$. $d$ is the {\em \textbf{dimensionality}} of $A$, and $N$ is the {\em \textbf{dimension size}}. Each argument to $A$ is an {\em \textbf{index}}, and we write \pmb{$i$} (in bold) for a tuple of indices. We write $\pmb{A}$ (in bold) for $A(\pmb{i})$, i.e. a tensor indexed with a tuple of indices. 
\end{definition}
We use the terms {\em \textbf{tensor}} and {\em \textbf{relation}} interchangingly. Note that we assume every dimension has the same size $N$ for simplicity. If the dimension sizes differ in a tensor $A$, we can easily set $N$ to be the maximum size and fill $A$ with zeroes accordingly. For example, for a 2D matrix $A$ with dimensions $|d_1| = m < |d_2| =n$, we simply set $N=n$ and have $A_{ij}=0$ for $i>m$. 

Now we give names for special forms of expressions at each level of the normal form: 
\begin{definition}{{\em (Atoms, Monomials, Terms, Polyterms)}}\label{forms} 
An indexed tensor $\pmb{A}$ is also called an {\em \textbf{atom}}. 
A {\em \textbf{monomial}} $m$ is a product of any number of atoms. A {\em \textbf{term}} $t$ is an aggregate of a monomial over a set of indices. A {\em \textbf{polyterm}} $e$ is the sum of terms (with a constant term $c$ at the end), where each term has a constant factor: 
\begin{alignat*}{3}
      &\pmb{A} & & := A(\pmb{i}) && (1)\, atoms \\
      &m & & := \pmb{A}_1 \times \pmb{A}_2 \times \dots \pmb{A}_n && (2)\, monomials  \\
      &t & & := \textstyle \sum_{\pmb{i}} m  &&(3)\, terms \\
      &e & & := c_1 t_1 + c_2 t_2 + \dots c_n t_n + c \,&&(4)\, polyterms
\end{alignat*}{}
We identify the monomial $m$ with a bag of atoms, denoted $bag(m)$, such
  that for every atom $\pmb{A}$ occurring $n$ times in $m$, $bag(m)$
  contains $n$ copies of $\pmb{A}$. 
  An index $i$ in a term $t=\sum_{\pmb{i}} m$ is {\em \textbf{bound}} if $i \in \pmb{i}$; otherwise it is {\em \textbf{free}}. We write $bv(t)$ for the set of bound indices in $t$, $fv(t)$ for the set of free indices in $t$, and $vars(t)$ for the set of all indices in $t$. 
\end{definition}{}
\begin{example}{}
The polyterm 
$2 \sum_{j} A(i,j) \times A(i,j) \times B(j,k) \times B(j,k) + 3 \sum_{l} A(i,l) \times C(l,k) + 2$ represents the linear algebra expression $2A^2B^2 + 3AC + 2$, where $A^2$ squares the matrix element-wise. The monomial $A(i,j) \times A(i,j) \times B(j,k) \times B(j,k)$ is the same as $A(i,j)  \times B(j,k) \times A(i,j) \times B(j,k)$, and we view it as the bag $\{A(i,j), A(i,j), B(j,k), B(j,k)\}$. 
\end{example}
Before giving the formal definition of a canonical form, we need to define two syntactical relationships between our expressions, namely {\em homomorphism} and {\em isomorphism}. Fix terms $t = \sum_{\pmb{i}} m$ and $t' = \sum_{\pmb{i'}} m'$, and let $f: \pmb{i} \rightarrow \pmb{i}'$ be any function. Let $\pmb{A} \in bag(m)$ be an atom of $m$. We write $f(A)$ for the result of applying $f$ to all bound indices of $\pmb{A}$. We write $f(bag(m))$ for the bag obtained by applying $f$ to each atom $\pmb{A} \in bag(m)$. 
\begin{definition}{\em (Term Homomorphism)}
A homomorphism $f: t \rightarrow t'$ is a function $f: \pmb{i} \rightarrow \pmb{i}'$ such that $f(bag(m))=bag(m')$. Note that $f$ is a one-to-one mapping between $bag(m)$ and $bag(m')$. 
\end{definition}{}
\begin{example}
Between terms
$t_1 = \sum_{vwst} A(i,v) \times B(v,w) \times A(i,s) \times B(s,t)$ and 
$t_2= \sum_{jk} A^2(i,j) \times B^2(j,k)$
there is a homomorphism:  $$t_1 \rightarrow t_2:[v \mapsto j, w \mapsto k, s \mapsto j, t \mapsto k]$$
\end{example}
We extend $f$ to take $vars(t_1)$ where free variables map to themselves. It is easy to see for any homomorphism $f$, $f(vars(t_1))=vars(t_2)$. 
\begin{corollary}\label{surjective}
Homomorphism is \emph{\textbf{surjective}} on the indices. 
\end{corollary}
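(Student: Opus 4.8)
The plan is to prove surjectivity by chasing an arbitrary target index backwards through the atom-level correspondence that the homomorphism condition $f(bag(m)) = bag(m')$ supplies. Concretely, I would fix an arbitrary index $j \in vars(t_2)$ and exhibit a preimage $i \in vars(t_1)$ with $f(i) = j$, so that $f(vars(t_1)) = vars(t_2)$ — which is exactly surjectivity on the indices. Since $vars(t_2)$ is the set of indices occurring among the atoms of the monomial $m'$, the index $j$ must appear inside some atom $\pmb{B} \in bag(m')$; this is the starting point of the chase.

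Next I would invoke the defining equation $f(bag(m)) = bag(m')$ together with the already-noted fact that $f$ is a one-to-one correspondence between the bags $bag(m)$ and $bag(m')$. This produces an atom $\pmb{A} \in bag(m)$ with $f(\pmb{A}) = \pmb{B}$. By the definition of how $f$ applies to an atom — mapping each of the atom's bound indices by the homomorphism and each free index by the identity — every index appearing in $\pmb{B} = f(\pmb{A})$ is of the form $f(i)$ for some index $i$ occurring in $\pmb{A}$. In particular the chosen $j$ equals $f(i)$ for some index $i$ of $\pmb{A}$, and that $i$ lies in $vars(t_1)$; since $j$ was arbitrary, surjectivity follows. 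The nice feature here is that, thanks to the extension of $f$ that fixes free variables, this single argument covers target indices that are bound in $t_2$ and those that are free in $t_2$ uniformly, so I never have to split into cases.

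The underlying calculation is trivial, so the only real care is the bound/free bookkeeping, and I expect the one genuine (if mild) obstacle to be a degenerate case worth ruling out: an aggregation variable listed in $\pmb{i'}$ that does not actually occur in $m'$ would sit in $vars(t_2)$ yet have no atom to trace back through, breaking surjectivity. I would dispose of this by restricting attention to terms in canonical form, where rule~(\ref{RRC_ac}), $\sum_i A = A \rmul dim(i)$ for $i \notin Attr(A)$, eliminates any aggregation index not appearing in its monomial; hence in the terms to which this corollary is applied every index genuinely occurs in some atom, and the atom-chasing argument reaches all of $vars(t_2)$.
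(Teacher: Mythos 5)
Your proof is correct and is essentially the paper's own argument run in the contrapositive direction: the paper supposes an index $i \in vars(t_2)$ lying outside $f(vars(t_1))$ and observes that the atom containing it cannot appear in $f(bag(m))$, contradicting $f(bag(m)) = bag(m')$ --- which is exactly your atom-chasing step read backwards. Your explicit disposal of dangling aggregation indices via canonicalization (the rule $\sum_i A = A * dim(i)$ for $i \notin Attr(A)$) is a minor tightening that the paper's one-line proof leaves implicit, not a genuinely different route.
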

\begin{proof}
Suppose for the sake of contradiction that a homomorphism $f: t_1 \rightarrow t_2$ is not 
surjective. Then there exists an index $i \in vars(t_2)$ that is not in $f(vars(t_1))$, 
and the atom containing $i$ does not appear in $f(t_1)$. That implies the monomial in $f(t_1)$ is cannot be equal to the monomial in $t_2$, so $f$ is not a homomorphism -- contradiction. 
\end{proof}{}

\begin{corollary}\label{compose}
Homomorphism is closed under composition: 
given homomorphisms $f: t_1 \rightarrow t_2$ and $g: t_2 \rightarrow t_3$, $g \circ f: t_1 \rightarrow t_3$ is a homomorphism from $t_1$ to $t_3$. 
\end{corollary}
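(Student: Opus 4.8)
The plan is to verify directly that $g \circ f$ satisfies the defining condition of a term homomorphism, namely that it carries $bag(m_1)$ onto $bag(m_3)$. Write $t_1 = \sum_{\pmb{i_1}} m_1$, $t_2 = \sum_{\pmb{i_2}} m_2$, and $t_3 = \sum_{\pmb{i_3}} m_3$. First I would confirm that $g \circ f$ is well-typed as a map on indices: since $f$ sends the bound indices $bv(t_1)$ into $bv(t_2)$ and fixes the free indices $fv(t_1)$ (by the extension convention), and $g$ does the same from $t_2$ to $t_3$, the composite $g \circ f$ sends $bv(t_1)$ into $bv(t_3)$ and fixes $fv(t_1)$. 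For the ``fix free indices'' convention to compose cleanly I would note that the free indices of all three terms coincide, so that a free index of $t_1$ is mapped to itself by $f$ and again by $g$, consistent with being free in $t_3$.

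The heart of the argument is a functoriality observation at the level of atoms. Applying a substitution to an atom means applying it to that atom's indices, and index substitution composes, so for every atom $\pmb{A}$ we have $(g \circ f)(\pmb{A}) = g(f(\pmb{A}))$. Lifting this atom-wise identity to bags---using that $f(bag(m))$ and $g(bag(m))$ are by definition obtained by applying $f$, respectively $g$, to each atom of the bag---yields $(g \circ f)(bag(m_1)) = g\bigl(f(bag(m_1))\bigr)$. The conclusion then follows by chaining the two hypotheses: $(g \circ f)(bag(m_1)) = g\bigl(f(bag(m_1))\bigr) = g(bag(m_2)) = bag(m_3)$, where the middle equality holds because $f$ is a homomorphism and the last because $g$ is. This is exactly the condition defining a homomorphism $g \circ f : t_1 \rightarrow t_3$. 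I would also remark that the one-to-one property noted in the definition is inherited: the induced atom map of $g \circ f$ is the composite of the induced atom maps of $f$ and $g$, each a bijection of bags, hence itself a bijection.

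I expect the only delicate point to be the functoriality step, i.e.\ making precise that applying the composite substitution atom-wise coincides with applying $f$ and then $g$, together with the free/bound bookkeeping that makes the extension convention compose. Everything else is routine: once atom-wise composition is established, the bag-level identity and the final chain of equalities are immediate, and no appeal to semantics is needed since the entire statement is syntactic.
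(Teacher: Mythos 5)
Your proof is correct and takes essentially the same route as the paper, whose entire argument is the one-line observation that ``a homomorphism is a function on indices, so composing homomorphisms is just composing functions.'' You have simply spelled out in full the atom-wise and bag-level bookkeeping (including the free-index convention) that this one-liner compresses, so there is nothing to change.
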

\begin{proof}
A homomorphism is a function on indices, so composing homomorphisms is just composing functions. 
\end{proof}{}
A stronger correspondence between terms is an \emph{isomorphism}: 
\begin{definition}[Term Isomorphism]
Terms $t_1=\sum_{\pmb{i_1}}m_1$ and $t_2=\sum_{\pmb{i_2}}m_2$ are \emph{\textbf{isomorphic}} iff there is a bijective homomorphism between them. 
We write $t_1 \equiv t_2$ to mean $t_1$ and $t_2$ are isomorphic. 
\end{definition}

A pair of homomorphisms $t_1 \rightarrow t_2$ and $t_2 \rightarrow t_1$ produce an isomorphism: 
\begin{lemma}\label{homocycle}
Given two terms $t_1$ and $t_2$, if there is a homomorphism $f: t_1 \rightarrow t_2$ and a homomorphism $g: t_2 \rightarrow t_1$ then $t_1 \equiv t_2$. If there is a cycle of homomorphisms among a number of terms, all terms on the cycle are isomorphic. 
\end{lemma}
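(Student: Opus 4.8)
The plan is to reduce isomorphism to a counting argument on the number of indices. By definition, to establish $t_1 \equiv t_2$ it suffices to exhibit a \emph{bijective} homomorphism between them; since we are already handed a homomorphism $f: t_1 \rightarrow t_2$, the only remaining task is to show that $f$ is a bijection on indices.

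First I would invoke Corollary~\ref{surjective}: every homomorphism is surjective on indices, so $f(vars(t_1)) = vars(t_2)$ and $g(vars(t_2)) = vars(t_1)$. Surjectivity of $f$ gives $|vars(t_2)| \leq |vars(t_1)|$, and surjectivity of $g$ gives $|vars(t_1)| \leq |vars(t_2)|$. Since a term is a product of finitely many atoms and therefore has only finitely many indices, these two inequalities force $|vars(t_1)| = |vars(t_2)|$. A surjection between finite sets of equal cardinality is automatically injective, hence bijective, so $f$ is a bijective homomorphism and thereby witnesses $t_1 \equiv t_2$.

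For the cycle statement I would close the loop using Corollary~\ref{compose}. Given a cycle of homomorphisms $t_1 \rightarrow t_2 \rightarrow \cdots \rightarrow t_k \rightarrow t_1$, fix any two terms $t_i, t_j$ on the cycle. Composing the homomorphisms along the arc from $t_i$ to $t_j$ yields a homomorphism $t_i \rightarrow t_j$, while composing along the complementary arc (continuing around the cycle from $t_j$ back to $t_i$) yields a homomorphism $t_j \rightarrow t_i$; both are homomorphisms by Corollary~\ref{compose}. Applying the two-way case just established gives $t_i \equiv t_j$, and since this holds for every pair, transitivity of isomorphism makes all terms on the cycle isomorphic.

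The argument is short because the two corollaries do the heavy lifting; the one place that needs care is the bookkeeping of free versus bound indices. I should confirm that the cardinality comparison is applied to the quantity $f$ actually permutes: $f$ fixes free variables pointwise and carries bound indices of $t_1$ into bound indices of $t_2$, so the counting genuinely takes place on $bv(t_1)$ and $bv(t_2)$, with $fv(t_1) = fv(t_2)$ accounted for by the identity part of $f$. Verifying that surjectivity therefore upgrades to bijectivity on the \emph{entire} index set $vars(t_1)$ is the only subtle step; everything else is a direct appeal to the definitions of homomorphism and isomorphism.
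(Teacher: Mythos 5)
Your proof is correct and follows essentially the same route as the paper: both arguments use Corollary~\ref{surjective} to obtain surjections in each direction (which, on finite index sets, force a bijection, hence an isomorphism) and Corollary~\ref{compose} to retract the arcs of a homomorphism cycle into a pair of opposing homomorphisms. You merely make explicit the cardinality bookkeeping and the free-versus-bound-index convention that the paper's one-line proof leaves implicit, which is a fair elaboration rather than a different approach.
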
{}
\begin{proof}
By Corollary~\ref{surjective}, a pair of homomorphisms between two terms are a pair of surjective maps between the terms' indices. A pair of surjective maps induce a bijective map which is an isomorphism. 
By Corollary~\ref{compose}, if $t_1$ and $t_2$ are on a cycle of homomorphism, we can retract the homomorphism chains between them to obtain a pair of homomorphisms $f: t_1 \rightarrow t_2$ and $g: t_2 \rightarrow t_1$, which implies $t_1 \equiv t_2$. 
\end{proof}{}
We are now ready to formally define the canonical form for RA expressions: 

\begin{definition}{\em (Canonical Form)} An RPlan expression (as defined in Table~\ref{tPlanOps}) is {\em \textbf{canonical}} if it is a polyterm (Definition~\ref{forms}) containing no isomorphic terms. 
\end{definition}{}

Our ultimate goal is to identify canonical form isomorphism with equivalence. That is, two canonical expressions are equivalent iff they are isomorphic. By equivalence we mean the expressions evaluate to the same result given any same inputs: 

\begin{definition}{\em (Equivalence of Expressions)} Two expressions are {\em \textbf{equivalent}} iff $\forall \pmb{T}. e_1(\pmb{T}) = e_2(\pmb{T})$, where $\pmb{T}$ is the set of input tensors to the expressions. We write $e_1 = e_2$ to mean $e_1$ and $e_2$ are equivalent. 
\end{definition}{}

We can canonicalize any expression by pulling $+$ to the top and pushing $\times$ to the bottom, while combining isomorphic terms $c_1 t + c_2 t$ into $(c_1 + c_2) t$: 

\begin{lemma}
For every RPlan expression, there is an equivalent canonical expression. 
\end{lemma}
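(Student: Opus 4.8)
The plan is to prove the lemma by structural induction on the RPlan expression $e$, constructing its canonical form $\mathcal{C}(e)$ from the canonical forms of its immediate subexpressions and justifying every rewrite by a rule of $R_{EQ}$ (Fig.~\ref{RRC}); since all rules in $R_{EQ}$ are sound, the polyterm produced at the root is automatically equivalent to $e$. The base case is when $e$ is an atom $A(\pmb{i})$ or a constant $c$: an atom is already the single-term polyterm $\sum_{\emptyset} A(\pmb{i})$ with coefficient $1$, and a constant is already a polyterm, so nothing needs to be done. For the inductive step I would treat the three RPlan constructors $+$, $*$ and $\sum$ separately, assuming by the induction hypothesis that each immediate subexpression has already been rewritten into a canonical polyterm.

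The union case is the easiest: given canonical $\mathcal{C}(e_1)$ and $\mathcal{C}(e_2)$, rule~\ref{RRC_pp} lets me flatten $\mathcal{C}(e_1) + \mathcal{C}(e_2)$ into a single sum of all their terms and constants; I then restore the canonical invariant by using the isomorphism test on terms to merge any $c\,t$ and $d\,t'$ with $t \equiv t'$ into $(c+d)\,t$, and by adding the two constant summands. The aggregate case $e = \sum_i e_1$ starts by pushing the aggregate through the top-level sum of $\mathcal{C}(e_1)$ with rule~\ref{RRC_ap} and past each constant coefficient with rule~\ref{RRC_ma}, so it suffices to canonicalize $\sum_i t_k$ for each term $t_k = \sum_{\pmb{j}} m_k$ and $\sum_i c$ for the trailing constant. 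I then split on whether $i$ occurs in the monomial $m_k$: if it does, rule~\ref{RRC_aa} absorbs $i$ into the bound index set, yielding $\sum_{\pmb{j},i} m_k$; if it does not, rule~\ref{RRC_ac} replaces the aggregate by the scalar factor $dim(i)$, which merges into the coefficient. The trailing constant is handled identically by rule~\ref{RRC_ac}, and a final merge of isomorphic terms reestablishes canonicity.

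The product case $e = e_1 * e_2$ carries the real work. First I distribute the join over both top-level sums using rule~\ref{RRC_mp} (iterated, with~\ref{RRC_mm} for associativity and commutativity), which reduces the task to canonicalizing a single product of two terms $t * s$ with $t = \sum_{\pmb{i}} m$ and $s = \sum_{\pmb{j}} m'$ (products involving the trailing constants reduce to scaled terms or a constant). After renaming the bound indices of $s$ so that $\pmb{i}$ and $\pmb{j}$ are disjoint and neither captures a free index of the other operand, rule~\ref{RRC_ma} pulls both aggregates to the front, giving $\sum_{\pmb{i},\pmb{j}} (m * m')$; rule~\ref{RRC_mm} then flattens $m * m'$ into a single monomial, and repeated atoms are combined into powers as required by the definition of canonical form. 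Distributing over all pairs of terms produces a sum of terms to which I again apply the isomorphic-term merge.

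The step I expect to be the main obstacle is the variable hygiene inside the product case: rule~\ref{RRC_ma} only fires under the side condition $i \notin A$, so before pulling any aggregate outward I must rename the bound indices of each term to be globally fresh, and I have to argue that this $\alpha$-renaming is itself sound (it is, since bound indices are merely summed out). The remaining bookkeeping — collapsing duplicated atoms into powers via~\ref{RRC_mm} and collapsing isomorphic terms into a single coefficient, each of which strictly decreases a count and hence terminates — is routine once the freshness discipline is in place. Because every rewrite used is a sound identity of $R_{EQ}$, the final polyterm is equivalent to $e$, completing the induction.
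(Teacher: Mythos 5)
Your proof is correct and follows essentially the same route as the paper, which disposes of this lemma in one line (``a standard application of the rewrite rules $R_{EQ}$'') and, for the main-text counterpart (Lemma~\ref{lCanonPreservesSemantics}), sketches exactly your strategy: structural induction, distributing $*$ over $+$, and pulling out the aggregates. You have merely supplied the omitted bookkeeping --- the freshness/renaming discipline for rule~\ref{RRC_ma}, power-collapsing, and merging of isomorphic terms --- all of which is sound and consistent with the paper's intent.
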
{}
\begin{proof}
The proof is a standard application of the rewrite rules $R_{EQ}$ in Figure~\ref{RRC}. 
\end{proof}{}

We can identify canonical expressions syntactically using term isomorphism: 

\begin{definition}{\em (Isomorphic Canonical Expressions)} 
Given two canonical expressions $e = \sum_{\pmb{i}} c_1 t_1 + \dots + c_n t_n + c $ and $e' = \sum_{\pmb{i'}} c_1' t_1' + \dots + c_m' t_m' + c'$, $e$ and $e'$ are isomorphic if there is a bijection $\sigma: [n] \rightarrow [m]$ (in particular $n=m$), such that $\forall i \in [n]$, 
     $c_i = c_{\sigma(i)}'$, 
     $t_i \equiv t_{\sigma(i)}'$, and 
     $c = c'$. 
\end{definition}{}

Note that isomorphic expressions have the same free variables. We now show two expressions are isomorphic iff they are equivalent: 

\begin{theorem}{(\textbf{Isomorphism Captures Equivalence})} For canonical expressions $e_1$ and $e_2$: 
$$e_1 \equiv e_2 \iff e_1 = e_2$$ 
\end{theorem}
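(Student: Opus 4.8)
The statement is an ``iff'' whose two directions are of very different difficulty, so the plan is to dispatch ($\Leftarrow$) in a line and spend the real effort on ($\Rightarrow$). The direction $e_1 \equiv e_2 \Rightarrow e_1 = e_2$ is immediate: an isomorphism of canonical expressions pairs up terms with equal coefficients and equal constant, and a term isomorphism is by definition a bijective renaming of the bound (summation) indices carrying $bag(m_1)$ onto $bag(m_2)$. Since the aggregate $\sum_{\pmb i}$ ranges over \emph{all} assignments of the bound indices, and $\times$ and $+$ are commutative and associative, isomorphic terms evaluate to the same value on every input, so the two polyterms agree. For the converse I would use the standard device of frozen (canonical) instances together with homomorphism counting, leaning on Corollary~\ref{surjective}, Corollary~\ref{compose}, and Lemma~\ref{homocycle}.

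For a term $t=\sum_{\pmb i} m$ let $D_t$ be the instance built as follows: choose a dimension $N \ge |vars(t)|$, assign the variables of $t$ pairwise-distinct values via an injection $\phi_0$ (with free variables fixed to their $\phi_0$-images), and set every input tensor $A$ to $1$ at exactly the tuples $\phi_0(\cdot)$ of the atoms $A(\cdots)$ occurring in $t$, and to $0$ elsewhere. Because all indices of $t$ receive distinct values, evaluating another term $t'$ on $D_t$ simply counts the assignments of $t'$'s variables that send every atom of $t'$ onto an atom of $t$; that is, $t'(D_t) = \#\{\text{homomorphisms } t' \to t\}$. This step relies essentially on arbitrary dimensions: $N$ may be taken as large as needed, exactly the freedom the examples following Lemma~\ref{lemma:unique:nf} show to be necessary.

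The crux is that homomorphisms exist only ``downward in size.'' By Corollary~\ref{surjective} a homomorphism $t' \to t$ is surjective on indices, so $|vars(t')| \ge |vars(t)|$, while the bag condition forces $m'$ and $m$ to contain the same number of atoms; when $|vars(t')| = |vars(t)|$ the map is a bijection, hence (a bijective homomorphism being exactly a term isomorphism) $t' \equiv t$. Ordering the terms of $e_1$ and $e_2$ by decreasing $|vars(\cdot)|$, I would run a strong induction. Having already matched all terms strictly larger than $t$ (paired isomorphically with equal coefficients across $e_1$ and $e_2$), their contributions to $e_1(D_t)$ and $e_2(D_t)$ are equal and cancel, since isomorphic terms have equal hom-count into the fixed target $D_t$. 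Terms strictly smaller than $t$ admit no homomorphism to $t$ and contribute $0$; terms of the same size contribute only if isomorphic to $t$, and a canonical expression contains no two isomorphic terms, so $t$ is the unique such term in $e_1$ and there is at most one partner $t'$ in $e_2$. Then $e_1(D_t)=e_2(D_t)$ collapses to $c_t\,|\mathrm{Aut}(t)| = c_{t'}\,|\mathrm{Aut}(t')|$ (right side $0$ if no partner exists); since $t\equiv t'$ forces equal automorphism counts and canonical coefficients are nonzero, this compels $t'$ to exist with $t'\equiv t$ and $c_t=c_{t'}$. The constant is isolated separately by evaluating at the all-zero instance, where every nonempty term vanishes and $e_1=c_1,\ e_2=c_2$ yields $c_1=c_2$. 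Collecting the matched pairs gives the bijection $\sigma$ witnessing $e_1 \equiv e_2$.

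The main obstacle is precisely this separation argument, and it is where the ``arbitrary dimension'' hypothesis earns its keep. I must guarantee that at $D_t$ no unintended term pollutes the count, which rests on (i) taking $\phi_0$ injective so that evaluation is an honest homomorphism count rather than a sum over collapsed assignments, and (ii) the structural facts that homomorphisms preserve atom-count and are surjective on variables, so that $|vars(\cdot)|$ strictly controls which terms can map into $t$. The genuinely delicate bookkeeping is cancelling the already-matched larger terms and upgrading ``equal value on $D_t$'' to ``isomorphic term with equal coefficient'' via $|\mathrm{Aut}(t)|=|\mathrm{Aut}(t')|$; once that is in place, the remainder is routine.
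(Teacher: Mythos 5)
Your ($\Leftarrow$) direction is fine and agrees with the paper, and your overall plan for ($\Rightarrow$) — strip constants and cross-isomorphic pairs, then distinguish the surviving expressions on a ``frozen'' instance built from a carefully chosen term — is the same skeleton the paper uses. The gap is the central counting identity $t'(D_t) = \#\{\text{homomorphisms } t' \to t\}$. With every tensor entry set to $1$ on the frozen tuples, evaluating $t'$ on $D_t$ sums over \emph{all} assignments of $t'$'s bound indices that send each atom of $t'$ onto \emph{some} frozen atom of $t$; such assignments may collapse distinct indices, need not be surjective, and need not cover $bag(m)$ with the correct multiplicities — so they are not homomorphisms in the paper's sense, whose definition demands $f(bag(m')) = bag(m)$ as a bag bijection (this is exactly what makes Corollary~\ref{surjective} true, and you invoke that corollary for maps your instance does not actually count). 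Concretely, take $t = \sum_{i,j} A(i) \times A(j)$ and $t' = \sum_i A(i) \times A(i)$: there is no homomorphism $t' \to t$ (surjectivity onto two indices from one is impossible), yet $t'(D_t) = 2 \neq 0$. This breaks both pillars of your size-descending induction — ``terms strictly smaller than $t$ contribute $0$'' and ``same-size terms contribute only if isomorphic'' — because lax atom-preserving maps travel downward in size, and the later step $c_t\,|\mathrm{Aut}(t)| = c_{t'}\,|\mathrm{Aut}(t')|$ is then polluted by these spurious contributions.

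The paper's proof evades exactly this by not using $0/1$ values: it assigns a fresh real indeterminate $x_n$ to each atom of the chosen term, so every term evaluates to a \emph{polynomial} in $x_1, \dots, x_n$, and it extracts the coefficient of the multilinear monomial $x_1 x_2 \cdots x_n$. That monomial can only arise from assignments hitting every frozen atom exactly once, i.e., from genuine bag-bijective homomorphisms into the chosen term; choosing $t_1$ \emph{minimal} in the homomorphism partial order (acyclic by Lemma~\ref{homocycle}, given the prior stripping you also perform) makes this coefficient nonzero in one expression and zero in the other, and a polynomial-identity argument over $\R$ then produces an actual distinguishing input. Your argument can be repaired along the same lines — replace raw evaluation counts by coefficient extraction, after which the $|\mathrm{Aut}|$ bookkeeping becomes unnecessary — or by the standard bag-semantics alternative of evaluating on blown-up instances and interpolating in the multiplicities; but as written, the frozen $0/1$ instance simply does not compute the quantity your induction needs.
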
{}
\begin{proof}
The left-to-right direction is straightforward: isomorphism only renames indices and reorders $+$ and $\times$, which does not change semantics. And because RA semantics is deterministic, two isomorphic expressions compute the same function. 

The right-to-left direction is exactly Lemma~\ref{lemma:unique:nf}, the uniqueness of canonical forms: 
$$e_1  = e_2  \Rightarrow e_1  \equiv e_2 $$
Without loss of generality, we make the following simplifying assumptions. First, we assume $e_1$ and $e_2$ contain no constant terms. Otherwise the expressions would evaluate to their respective constants on all-0 inputs, so the constant terms must be equal. Subtracting the same constant preserves equivalence and isomorphism, so we can simply remove equal constant terms. 
Second, we assume no term in $e_1$ is isomorphic to any term in $e_2$. 
Otherwise if $e_1$ contains $c_1t_1$ and $e_2$ contains $c_2t_2$ with $t_1 \equiv t_2$, then we can write $e_1 = c_1 t_1 + e_1'$ and $e_2 = c_2 t_2 + e_2'$ where $e_1'$ and $e_2'$ are polyterms. Assume w.l.o.g that $c_1 \leq c_2$. Since $c_1t_1 + e_1' = c_2 t_2 + e_2'$, $e_1' = (c_2 - c_1)t_2 +e_2'$ and we have removed $t_1$ from $e_1$. 
Furthermore, $e_1 = e_2 \Rightarrow e_1' = e_2'$ and $e_1' \equiv e_2' \Rightarrow e_1 \equiv e_2$. 
Then by proving $e_1' = e_2' \Rightarrow  e_1' \equiv e_2'$ we can show the following:
$$e_1 = e_2 \Rightarrow e_1' = e_2' \Rightarrow  e_1' \equiv e_2' \Rightarrow e_1 \equiv e_2$$
Finally we assume $e_1$ and $e_2$ are fully aggregated, i.e. every index is bound. Without this assumption, we first observe $e_1$ and $e_2$ must have the same free indices to output tensors with the same dimensionality. Then we define $e_1' = \sum_{\pmb{i}} e_2$ and $e_2' = \sum_{\pmb{i}} e_2$, where $\pmb{i}$ is the set of free variables of $e_1$ and $e_2$. We can easily show $e_1 = e_2 \Rightarrow e_1' = e_2'$ and $e_1' \equiv e_2' \Rightarrow e_1 \equiv e_2$, and with a proof of $e_1' = e_2' \Rightarrow e_1' \equiv e_2'$ it follows $e_1 = e_2 \Rightarrow e_1 \equiv e_2$. 

Each expression can now be viewed as a set of terms, where each term has a constant factor and no two terms are isomorphic. Denoting by $t_1 < t_2$ if there is a homomorphism $t_1 \rightarrow t_2$, we observe homomorphism induces a partial order on the terms of $e_1$ and $e_2$. There is no cycle of homomorphisms, because by Lemma~\ref{homocycle} such a cycle implies isomorphisms, but we have assumed no isomorphic terms. 

We now prove Lemma~\ref{lemma:unique:nf} through its contrapositive: 

$$e_1 \not \equiv e_2 \Rightarrow e_1 \not = e_2$$

We proceed by constructing a set of witness input tensors given which $e_1$ and $e_2$ return different results. First, let $t_1 = \sum_{i_1 i_2 \dots} \pmb{X}_1 \times \pmb{X}_2 \times \dots$ be the minimal term under the partial order induced by homomorphism. Assume w.l.o.g $t_1 \in e_1$. Define bijective function $\phi: i_n \rightarrow n$ mapping each index to a unique number. For each $\pmb{X}_n = X_n(\pmb{i}_n)$ introduce a real-valued variable $x_{n}$ and construct the input tensor $X_n$ such that $X_{\pmb{i}_n} = x_n$. Finally, set all undefined entries of each input tensor to $0$. 

Given the input tensors defined above, $t_1$ evaluates to a polynomial $p_1$ of variables $x_1 \dots x_n$. Every monomial in $p_1$ corresponds to a function from $t_1$'s atoms $\pmb{X}_a, \pmb{X}_b, \dots$ to some variables $x_m, x_n, \dots$, and we call such a function $f_m$. There is one special (bijective) $f_m$ that maps each atom to its own variable: 
let $m_1$ be the monomial in $t_1$, $m_1 = \pmb{X}_1 \times \pmb{X}_2 \times \dots$. We define $f_p(t_1) = m_1(\phi(i_1), \phi(i_2), \dots) = x_1 x_2 \dots$. $f_p(t_1)$ does not appear in any polynomial from other terms. Otherwise if it appears in polynomial $p_2$ from term $t_2$, we would be able to construct a homomorphism
 $t_2 \rightarrow t_1 = f_m \circ f_p^{-1}$, where $f_m$ maps the monomial $m_2$ in $p_2$ to $t_1$. But that is impossible because we have picked $t_1$ to be the minimal term under homomorphism. Therefore, $p_1$ differs from any polynomial from terms in $t_2$, and by the fundamental theorem of algebra the polynomial are inequivalent. As a result, $e_1 \not = e_2$. 
\end{proof}{}

\end{document}